\documentclass[11pt]{article}

\usepackage{a4}
\usepackage{authblk}
\usepackage{amsmath,amssymb}
\usepackage{amsthm}
\usepackage[dvipdfmx]{graphicx}

\setlength{\topmargin}{-10mm}

\pdfinclusionerrorlevel=1
\pdfminorversion=7

\pdfoutput=1

\usepackage{thmtools}

\newtheorem{theorem}{Theorem}
\newtheorem{lemma}{Lemma}

\theoremstyle{definition}

\newtheorem{remark}{Remark}
\newtheorem{example}{Example}

\usepackage[final,mode=multiuser]{fixme}
\fxuseenvlayout{colorsig}
\fxusetargetlayout{color}
\FXRegisterAuthor{hf}{ahf}{HF}
\FXRegisterAuthor{yn}{ayn}{YN}
\FXRegisterAuthor{si}{asi}{SI}
\fxsetface{env}{}

\newtheorem{fact}{Fact}

\newcommand{\Prefix}{\mathsf{Prefix}}
\newcommand{\Substr}{\mathsf{Substr}}
\newcommand{\Suffix}{\mathsf{Suffix}}

\newcommand{\Long}{\mathsf{long}}
\newcommand{\CDAWG}{\mathsf{CDAWG}}

\newcommand{\EndPos}{\mathsf{EndPos}}
\newcommand{\BegPos}{\mathsf{BegPos}}

\newcommand{\EqrL}{\equiv^{\mathrm{L}}}
\newcommand{\EqrR}{\equiv^{\mathrm{R}}}
\newcommand{\EqcL}[1]{[{#1}]^{\mathrm{L}}}
\newcommand{\EqcR}[1]{[{#1}]^{\mathrm{R}}}

\newcommand{\LeftM}{\mathsf{LeftM}}
\newcommand{\RightM}{\mathsf{RightM}}
\newcommand{\M}{\mathsf{M}}
\newcommand{\D}{\mathsf{d}}

\newcommand{\FIns}{\mathsf{f}_{\mathrm{Ins}}}
\newcommand{\GIns}{\mathsf{g}_{\mathrm{Ins}}}
\newcommand{\FDel}{\mathsf{f}_{\mathrm{Del}}}
\newcommand{\FSub}{\mathsf{f}_{\mathrm{Sub}}}
\newcommand{\GSub}{\mathsf{g}_{\mathrm{Sub}}}

\newcommand{\size}{\mathsf{e}}

\newcommand{\ASIns}{\mathsf{AS}_{\mathrm{LeftIns}}}
\newcommand{\ASDel}{\mathsf{AS}_{\mathrm{LeftDel}}}
\newcommand{\ASSub}{\mathsf{AS}_{\mathrm{LeftSub}}}

\begin{document}

\title{Tight bounds for the sensitivity of CDAWGs with left-end edits}

\author[1]{Hiroto~Fujimaru}
\author[2]{Yuto~Nakashima}
\author[2]{Shunsuke~Inenaga}

\affil[1]{Department of Information Science and Technology
{\tt fujimaru.hiroto.134@s.kyushu-u.ac.jp}}

\affil[2]{Department of Informatics, Kyushu University, Japan
{\tt \{nakashima.yuto.003, inenaga.shunsuke.380\}@m.kyushu-u.ac.jp}}

\date{}
\maketitle

\begin{abstract}
\emph{Compact directed acyclic word graphs} (\emph{CDAWGs}) [Blumer et al. 1987] are a fundamental data structure on strings with applications in text pattern searching, data compression, and pattern discovery. Intuitively, the CDAWG of a string $T$ is obtained by merging isomorphic subtrees of the suffix tree [Weiner 1973] of the same string $T$, thus CDAWGs are a compact indexing structure. In this paper, we investigate the sensitivity of CDAWGs when a single character edit operation (insertion, deletion, or substitution) is performed at the left-end of the input string $T$, namely, we are interested in the worst-case increase in the size of the CDAWG after a left-end edit operation. We prove that if $\size$ is the number of edges of the CDAWG for string $T$, then the number of new edges added to the CDAWG after a left-end edit operation on $T$ does not exceed $\size$. Further, we present a matching lower bound on the sensitivity of CDAWGs for left-end insertions, and almost matching lower bounds for left-end deletions and substitutions. We then generalize our lower-bound instance for left-end insertions to \emph{leftward online construction} of the CDAWG, and show that it requires $\Omega(n^2)$ time for some string of length $n$.
\end{abstract}

\section{Introduction}

\emph{Compact directed acyclic word graphs} (\emph{CDAWGs})~\cite{Blumer1987}  are a fundamental data structure on strings that have applications in fields including text pattern searching~\cite{Crochemore1997,Inenaga2005}, data compression~\cite{BelazzouguiC17,Takagi2017}, and pattern discovery~\cite{Takeda2000}.
Intuitively, the CDAWG of a string $T$, denoted $\CDAWG(T)$, is obtained by merging isomorphic subtrees of the suffix tree~\cite{Weiner1973} of the same string $T$. Thus the size of the CDAWG is never larger than that of the suffix tree. A more detailed analysis is reviewed below:

\sinote*{modified}{%
It is well known that the internal nodes of $\CDAWG(T)$ correspond to \emph{maximal repeats} in $T$, and the number $\size$ of right-extensions of maximal repeats in $T$ is equal to the number of edges of $\CDAWG(T)$.
This contrasts that the internal nodes of the suffix tree for $T$ corresponds to \emph{right-maximal repeats} in $T$.
While the suffix tree for any string $T$ of length $n$ contains
$\Theta(n)$ nodes and edges with unique end-marker $\$$ at the right-end of $T$,
it is known that the numbers of nodes and edges in the CDAWGs can be as small as $\Theta(\log n)$ for highly repetitive strings~\cite{RadoszewskiR12} even with $\$$.
}%

The number $\size$ of edges in $\CDAWG(T)$ has been used as one of repetitiveness measures of string $T$. Namely, when $\size$ is small, then the string contains a lot of repetitive substrings hence being well compressible.
Further, one can obtain a \emph{grammar-based compression} of size $O(\size)$ via the CDAWG of the input string $T$~\cite{BelazzouguiC17}. Some relations between $\size$ and the number $\mathsf{r}$ of equal-letter runs in the \emph{Burrows-Wheeler transform} (\emph{BWT})~\cite{BurrowsWheeler} have also been investigated~\cite{BelazzouguiCGPR15}.

Recently, Akagi et al.~\cite{AkagiFI2023} proposed the notion of \emph{sensitivity} of string repetitiveness measures and string compressors, including the aforementioned $\size$ and $\mathsf{r}$, the smallest \emph{string attractor} size $\gamma$~\cite{KempaP18}, the \emph{substring complexity} $\delta$~\cite{Kociumaka2023}, and the Lempel-Ziv parse size $\mathsf{z}$~\cite{LZ77}. The sensitivity of a repetitiveness measure $\mathsf{c}$ asks how much the measure size increases when a single-character edit operation is performed on the input string, and thus the sensitivity allows one to evaluate the robustness of the measure/compressor against errors/edits.

This paper investigates the sensitivity of CDAWGs when a single character edit operation (insertion, deletion, or substitution) is performed at the left-end of the input string $T$, namely, we are interested in the worst-case increase in the size of the CDAWG after an left-end edit operation. We prove that if $\size$ is the number of edges of the CDAWG for string $T$, then the number of new edges which are added to the CDAWG after an left-edit operation on $T$ is always less than $\size$. Further, we present a matching lower bound on the sensitivity of CDAWGs for left-end insertions, and almost matching lower bounds for left-end deletion, and substitution (see Table~\ref{tb:summary} for a summary of these results).

We then generalize our lower-bound instances for left-end insertion to \emph{leftward online construction} of the CDAWG, and show that it requires $\Omega(n^2)$ time.
Here, leftward online construction of the CDAWG for the input string $T$ of length $n$ refers to the task of updating the CDAWG of $T[i+1..n]$ to the CDAWG of $T[i..n]$ for decreasing $i = n, \ldots, 1$. 
This contrasts with the case of \emph{rightward online CDAWG construction} for which a linear-time algorithm exists~\cite{Inenaga2005}.

\begin{table}[h]
  \centering
  \caption{Our results: additive sensitivity of CDAWGs with left-end edit operations.}
  \label{tb:summary}
  \begin{tabular}{|c||c|c|}
    \hline
    edit operation & upper bound & lower bound \\ \hline
    left-end insertion~($T \Rightarrow aT$) & $\size-1$ & $\size-1$ \\ \hline
    left-end deletion~($T \Rightarrow T[2..|T|]$) & $\size-3$ & $\size-4$ \\ \hline
    left-end substitution~($T = aS \Rightarrow bS = T'$) & $\size$ & $\size-3$ \\ \hline
  \end{tabular}
\end{table}

A preliminary version of this work appeared in~\cite{FujimaruNI23_WORDS}.
Below is a list of new results in this full version:
\begin{itemize}
\item Full proofs for our lower bounds for the sensitivity of CDAWGs with left-end edit operations (Section~\ref{sec:insert_lowerbound} for insertions, Section~\ref{sec:delete_lowerbound} for deletions, and Section~\ref{sec:substitute_lowerbound} for substitutions).

\item The tight lower bound $\size-1$ for left-end insertions, which improves the previous lower bound $\size-2$ reported in the preliminary version~\cite{FujimaruNI23_WORDS}.

\item A tighter upper bound $\size-3$ for the sensitivity of CDAWGs with left-end deletions (Section~\ref{sec:deletions}), which improves the previous upper bound $\size-2$ reported in the preliminary version~\cite{FujimaruNI23_WORDS}.
  
\item A new  $\Omega(n^2)$-time lower bound for leftward online \emph{batched} constructions for CDAWGs, where a string of fixed length $b$ is prepended to the current string, and the task is to update the CDAWG of $T[1+kb..n]$ to the CDAWG of $T[1+(k-1)b..n]$ for decreasing $k = n/b, \ldots, 1$ (Theorem~\ref{theo:batched_online} in Section~\ref{sec:batched_online}).
\end{itemize}

\subsection*{Related work}
Akagi et al.~\cite{AkagiFI2023} presented lower bounds
when a new character is deleted (resp. substituted) in the middle of the string, with a series of strings for which the size $\size$ of the CDAWG additively increases by $\size-4$ (resp. $\size-2$).
They also showed a lower bound when a new character is inserted at the \emph{right-end} of the string, showing a series of strings for which the size of the CDAWG additively increases by $\size-2$.
While an additive $\size+O(1)$ upper bound for the case of right-end insertion
readily follows from the \emph{rightward} online construction of CDAWGs~\cite{Inenaga2005}, no non-trivial upper bounds for the other edit operations,
including our case of left-end edit operations, are known.

Our $\Omega(n^2)$ lower-bound for leftward online construction of the CDAWG extends the quadratic lower-bound for maintaining the CDAWG in the sliding window model~\cite{SenftD08} (remark that fixing the right-end of the sliding window is equivalent to our leftward online construction).

\section{Preliminaries}

Let $\Sigma$ be an \emph{alphabet} of size $\sigma$.
An element of $\Sigma^*$ is called a \emph{string}.
For a string $T \in \Sigma^*$, the length of $T$ is denoted by $|T|$.
The \emph{empty string}, denoted by $\varepsilon$, is the string of length $0$.
Let $\Sigma^+ = \Sigma^* \setminus \{\varepsilon\}$.
If $T = uvw$, then $u$, $v$, and $w$ are called a \emph{prefix}, \emph{substring},
and \emph{suffix} of $T$, respectively.
The sets of prefixes, substrings, and suffixes of string $T$ are denoted by
$\Prefix(T)$, $\Substr(T)$, and $\Suffix(T)$, respectively.
For a string $T$ of length $n$, $T[i]$ denotes the $i$th character of $T$
for $1 \leq i \leq n$,
and $T[i..j] = T[i] \cdots T[j]$ denotes the substring of $T$ that begins at position $i$ and ends at position $j$ on $T$ for $1 \leq i \leq j \leq n$.
For two strings $u$ and $T$,
let $\BegPos(u, T) = \{i \mid T[i..i+|u|-1] = u\}$ and
$\EndPos(u, T) = \{i \mid T[i-|u|+1..i] = u\}$ denote
the sets of beginning positions and the set of ending positions of $u$ in $T$,
respectively.

For any substrings $u, v \in \Substr(T)$ of a string $T$,
we write $u \EqrL_T v$ iff $\EndPos(u, T) = \EndPos(v, T)$.
Let $\EqcL{\cdot}_T$ denote the equivalence class of strings under $\EqrL_T$.
For $x \in \Substr(T)$, let $\Long(\EqcL{x}_T)$ denote the longest member of $\EqcL{x}_T$.
Let $\LeftM(T) = \{\Long(\EqcL{x}_T) \mid x \in \Substr(T)\}$.
Any element $u \in \LeftM(T)$ is said to be \emph{left-maximal} in $T$, since there are two distinct characters $c, d \in \Sigma$ such that $cu,du \in \Substr(T)$, or $u \in \Prefix(T)$.
For any non-longest element $y \in \EqcL{x}_T \setminus \{\Long(\EqcL{x}_T)\}$
there exists a unique non-empty string $\alpha$ such that $\alpha y = \Long(\EqcL{x}_T)$, i.e. any occurrence of $y$ in $T$ is immediately preceded by $\alpha$.

Similarly, we write $u \EqrR_T v$ iff $\BegPos(u, T) = \BegPos(v, T)$.
Let $\EqcR{\cdot}_T$ denote the equivalence class of strings under $\EqrR_T$.
For $x \in \Substr(T)$, let $\Long(\EqcR{x}_T)$ denote the longest member of $\EqcR{x}_T$.
Let $\RightM(T) = \{\Long(\EqcR{x}_T) \mid x \in \Substr(T)\}$.
Any element $u \in \RightM(T)$ is said to be \emph{right-maximal} in $T$, since there are two distinct characters $c, d \in \Sigma$ such that $uc,ud \in \Substr(T)$, or $u \in \Suffix(T)$.
For any non-longest element $y \in \EqcR{x}_T \setminus \{\Long(\EqcR{x}_T)\}$
there exists a unique non-empty string $\beta$ such that $y \beta = \Long(\EqcR{x}_T)$, i.e. any occurrence of $y$ in $T$ is immediately followed by $\beta$.
Let $\mathsf{M}(T) = \LeftM(T) \cap \RightM(T)$.
Any element of $\mathsf{M}(T)$ is said to be \emph{maximal} in $T$.

The \emph{compact directed acyclic word graph} (\emph{CDAWG}) of a string $T$,
denoted $\CDAWG(T) = (\mathsf{V}, \mathsf{E})$, is an edge-labeled DAG such that
\begin{eqnarray*}
  \mathsf{V}_T & = & \{\EqcL{x}_T \mid x \in \RightM(T)\}, \\
  \mathsf{E}_T & = & \{(\EqcL{x}_T, \beta, \EqcL{x \beta}_T) \mid \beta \in \Sigma^+, x, x\beta \in \RightM(T), y \beta \in \EqcL{x \beta}_T \mbox{ for any } y \in \EqcL{x}_T\}.
\end{eqnarray*}

\begin{figure}[t]
  \centering
  \includegraphics[keepaspectratio,scale=0.45]{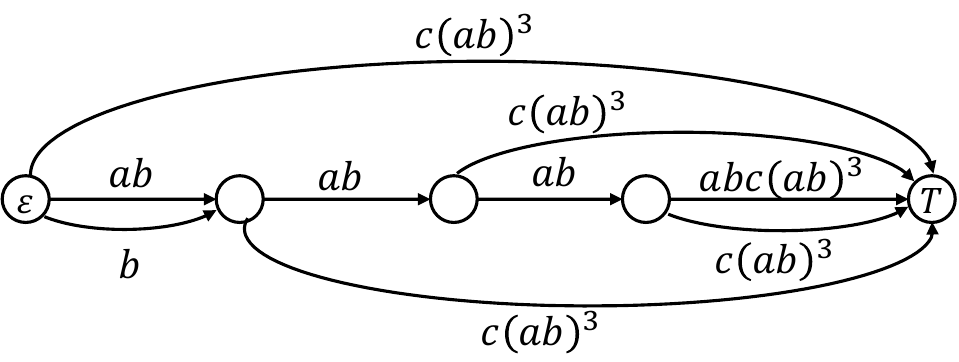}
  \caption{Illustration for $\CDAWG(T)$ of string $T=(ab)^4 c(ab)^3$. Every substring of $T$ can be spelled out from a distinct path from the source $\varepsilon$. There is a one-to-one correspondence between the maximal substrings in $\M(T) = \{\varepsilon, ab, (ab)^2, (ab)^3, (ab)^4 c(ab)^3\}$ and the nodes of $\CDAWG(T)$. The number of right-extensions of $\CDAWG(T)$ is the number $\size(T)$ of edges, which is 9 in this example.}
  \label{fig:CDAWG}
\end{figure}

See Figure~\ref{fig:CDAWG} for a concrete example of CDAWGs.
Intuitively, the strings in $\RightM(T)$ correspond to the
nodes of the suffix tree~\cite{Weiner1973} of $T$,
and the operator $\EqcL{\cdot}_T$ merges the isomorphic subtrees
of the suffix tree.
Recall that the nodes of the suffix tree for $T$
correspond to the right-maximal substrings of $T$.
Since $\Long(\EqcL{x}_T)$ is a maximal substring of $T$ for any $x \in \RightM(T)$, we have the following fact:
\hspace*{1pc}
\begin{fact} \label{fact:maximal_CDAWG}
There is a one-to-one correspondence between the elements of $\M(T)$ and the nodes of $\CDAWG(T)$.
\end{fact}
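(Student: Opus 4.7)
The plan is to exhibit the natural map $\phi : \M(T) \to \mathsf{V}_T$ defined by $\phi(u) = \EqcL{u}_T$ and verify that it is a bijection. Well-definedness is immediate: for $u \in \M(T) = \LeftM(T) \cap \RightM(T)$ the right-maximality of $u$ shows that $\EqcL{u}_T$ is indeed a node of $\CDAWG(T)$ by the definition of $\mathsf{V}_T$.

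For injectivity, suppose $\EqcL{u}_T = \EqcL{v}_T$ for $u, v \in \M(T)$. Each of $u$ and $v$ lies in $\LeftM(T)$, and the definition $\LeftM(T) = \{\Long(\EqcL{x}_T) \mid x \in \Substr(T)\}$ forces the left-maximal member of any L-equivalence class to be its unique longest representative. Hence $u = \Long(\EqcL{u}_T) = \Long(\EqcL{v}_T) = v$.

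For surjectivity, fix a node $\EqcL{x}_T$ with $x \in \RightM(T)$ and let $u = \Long(\EqcL{x}_T)$. Then $u \in \LeftM(T)$ by construction, so it suffices to show $u \in \RightM(T)$, since this yields $\phi(u) = \EqcL{u}_T = \EqcL{x}_T$. The key observation is that the equality $\EndPos(u, T) = \EndPos(x, T)$ together with $|u| \ge |x|$ forces $x$ to be a suffix of $u$: pick any position $i$ in the common end-position set, and both $u$ and $x$ are read off as suffixes of $T[1..i]$, the longer of which contains the shorter. Consequently the character $T[i+1]$ following an occurrence of $u$ at end-position $i$ is exactly the character following the corresponding occurrence of $x$, and $u \in \Suffix(T)$ iff $|T| \in \EndPos(u, T) = \EndPos(x, T)$ iff $x \in \Suffix(T)$. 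Thus the right-extension profile of $u$ coincides with that of $x$, and the right-maximality of $x$ transfers to $u$, giving $u \in \M(T)$ as required.

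The only non-routine step is the surjectivity argument, and within it the main obstacle is bridging the two different flavors of information: L-equivalence is phrased purely in terms of sets of ending positions, whereas right-maximality is a statement about the characters (or end-of-string) that follow occurrences. Once one notices that equal end-position sets force the shorter representative to sit as a suffix of the longer one, the right-extension behaviors of $x$ and $u = \Long(\EqcL{x}_T)$ must agree, and the bijection follows cleanly.
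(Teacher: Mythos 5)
Your proof is correct and follows essentially the same route as the paper, which justifies the fact with the single observation that $\Long(\EqcL{x}_T)$ is maximal in $T$ for every $x \in \RightM(T)$; your surjectivity step is exactly a careful expansion of that observation, and the rest (well-definedness, injectivity via uniqueness of the longest member of an L-class) is the routine part the paper leaves implicit.
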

We can regard each element of $\M(T)$ as a node of $\CDAWG(T)$ by Fact~\ref{fact:maximal_CDAWG}.
We thus sometimes identify $\mathsf{V}_T$ with $\M(T)$ for convenience.
For any $x \in \M(T)$, $\D_{T}(x)$ denotes the out-degree of the node $x$ in $\CDAWG(T)$.

A non-empty substring $x$ of string $T$ is called a \emph{maximal repeat} in $T$
if $x$ is maximal in $T$ and $|\BegPos(x, T)| = |\EndPos(x, T)| \geq 2$.
We remark that the set of maximal repeats in $T$ coincides with
$\mathsf{M}(T) \setminus \{\varepsilon, T\}$,
namely the longest elements of all internal nodes of $\CDAWG(T)$ are maximal repeats in $T$, and they are the only maximal repeats in $T$.

The \emph{size} of $\CDAWG(T) = (\mathsf{V}_T, \mathsf{E}_T)$
for a string $T$ of length $n$
is the number $\size(T) = |\mathsf{E}_T|$ of edges in $\CDAWG(T)$,
which is also referred to as the number of right-extensions of maximal repeats in $T$.
Using this measure $\size$, we define
the worst-case additive \emph{sensitivity} of the CDAWG
with left-end edit operations (resp. insertion, deletion, and substitution) by:
\begin{eqnarray*}
  \ASIns(\mathsf{\size}, n) & = & \max_{T \in \Sigma^n, a \in \Sigma} \{\size(aT)-\size(T) \}, \\
  \ASDel(\size, n) & = & \max_{T \in \Sigma^n} \{\size(T[2..n])-\size(T)\}, \\
  \ASSub(\size, n) & = & \max_{T \in \Sigma^n, a \in \Sigma \setminus \{T[1]\}} \{\size(aT[2..n])-\size(T)\}.
\end{eqnarray*}

For the sensitivity of CDAWGs,
we first briefly describe the special case where both the original string $T$ and an edited string $T'$ are unary. Let $T = a^n$.
Clearly, every $a^i$ with $1 \leq i < n$ is a maximal substring of $T$
and it is only followed by $a$. Thus $\size(T) = n-1$.
In case of insertion, i.e. $T' = aT = a^{n+1}$,
we similarly have $\size(T') = n$.
Thus $\size(T') - \size(T) = 1$ for unary strings.
Symmetrically, we have $\size(T') - \size(T) = -1$ in the case of deletion
with $T' = a^{n-1}$.
There is no substitution when $\sigma = 1$.
In what follows, we focus on the case where $\sigma \geq 2$.

\section{Sensitivity of CDAWGs with left-end insertions}

We consider the worst-case additive sensitivity
$\ASIns(\size, n)$ of $\CDAWG(T)$ when a new character $a$ is prepended to
input string $T$ of length $n$, i.e. $T' = aT$.

In the following sections, we present tight bounds for $\ASIns(\size, n)$
in the case of left-end insertions.

\subsection{Upper bound for $\ASIns(\size, n)$ on CDAWGs}

We divide the value $\size(T') - \size(T)$ into two components
$\FIns(T)$ and $\GIns(T)$ such that
\begin{itemize}
\item $\FIns(T)$ is the total out-degrees of new nodes that appear in $\CDAWG(aT)$;
\item $\GIns(T)$ is the total number of new out-going edges of nodes that already exist in $\CDAWG(T)$.
\end{itemize}
Clearly $\size(T') - \size(T) \leq \FIns(T)+\GIns(T)$.
We first consider the above two components separately,
and then we merge them to obtain the desired upper bound.

\subsubsection{$\FIns(T)$: total out-degrees of new nodes}
Suppose $u$ is a new node for $\CDAWG(aT)$,
where $u \notin \M(T)$ and $u \in \M(aT)$.
This implies that there is a new occurrence of $u$ in $aT$ as a prefix.
Let $u = ax$. The following is our key lemma:

\vspace*{1pc}
\begin{lemma} \label{lem:node_created}
  If $ax \notin \M(T)$ and $ax \in \M(aT)$
  (i.e. $ax$ is a new node in $\CDAWG(aT)$), 
  then $x \in \M(T)$.
  Also, $\D_{aT}(ax) \leq \D_{T}(x)$.
\end{lemma}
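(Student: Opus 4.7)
The plan is to split the lemma into the structural claim $x \in \M(T)$ and the out-degree inequality, and to handle the former by verifying right-maximality and left-maximality of $x$ in $T$ separately. A simple observation used throughout is that $\Substr(aT) = \Prefix(aT) \cup \Substr(T)$: any substring of $aT$ either has its leftmost occurrence at position $1$, hence is a prefix of $aT$, or has some occurrence at position $\geq 2$, hence appears in $T$. Right-maximality of $x$ should be the easy part: if $ax \in \Suffix(aT)$ then $x \in \Suffix(T)$; otherwise $ax$ has two distinct right-extensions $axc, axd \in \Substr(aT)$, and these descend via the observation to $xc, xd \in \Substr(T)$.

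The main obstacle is left-maximality of $x$ in $T$, and I would argue by contradiction after peeling off the trivial sub-case $x \in \Prefix(T)$. In the remaining case $x \notin \Prefix(T)$ we have $ax \notin \Prefix(aT)$, so the hypothesis $ax \in \LeftM(aT)$ forces $ax$ to have an occurrence at some position $\geq 2$ of $aT$, whence $ax \in \Substr(T)$. Now suppose $x$ has a unique preceding character $c$ in $T$: that character must be $a$, since the occurrence of $x$ sitting inside the just-obtained occurrence of $ax$ in $T$ is preceded by $a$. This gives $\EndPos(x,T) = \EndPos(ax,T)$, so $ax \in \EqcL{x}_T$, which in turn makes $ax$ right-maximal in $T$ (its right-extensions and suffix-status coincide with those of $x$). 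Combined with the assumption $ax \notin \M(T)$, this forces $ax$ to fail left-maximality in $T$, so $ax \notin \Prefix(T)$ and there is a unique character $e$ preceding $ax$ in $T$. The punchline: no occurrence of $ax$ in $aT$ sits at position $1$ (as $x \notin \Prefix(T)$) or at position $2$ (as $ax \notin \Prefix(T)$), so every occurrence inherits the preceding character $e$ from $T$, contradicting $ax \in \LeftM(aT)$.

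For the out-degree inequality I would invoke the fact that, for any $u \in \M(S)$, the out-degree of $u$ in $\CDAWG(S)$ equals $|\{c \in \Sigma : uc \in \Substr(S)\}|$ (one out-edge per distinct first right-extending character). The inclusion $\{c : axc \in \Substr(aT)\} \subseteq \{c : xc \in \Substr(T)\}$ then follows by the same descent used for right-maximality: an occurrence of $axc$ in $aT$ at position $1$ gives $xc \in \Prefix(T) \subseteq \Substr(T)$, while one at position $\geq 2$ gives $xc \in \Substr(T)$. This yields $\D_{aT}(ax) \leq \D_T(x)$, with no further subtleties.
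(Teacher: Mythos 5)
Your proof is correct. The out-degree inequality and the right-maximality of $x$ are handled essentially as in the paper (the paper argues right-maximality contrapositively via the unique extension $\beta$ of $x$; you descend the witnesses of right-maximality of $ax$ from $aT$ down to $T$ --- the same occurrence-descent in the other direction). Where you genuinely diverge is left-maximality. The paper first notes, in the text preceding the lemma, that any node in $\M(aT)\setminus\M(T)$ must have a new occurrence and hence must be a prefix of $aT$; from $ax\in\Prefix(aT)$ it gets $x\in\Prefix(T)$ and left-maximality of $x$ for free. You avoid that auxiliary claim entirely: in the nontrivial case $x\notin\Prefix(T)$ you assume a unique predecessor of $x$, force it to equal $a$, deduce $\EndPos(ax,T)=\EndPos(x,T)$, transfer the already-established right-maximality from $x$ to $ax$, and then use $ax\notin\M(T)$ to extract a unique predecessor $e$ of $ax$ that persists in $aT$ (since $ax$ occurs neither at position $1$ nor at position $2$ of $aT$), contradicting $ax\in\LeftM(aT)$. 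Your route is longer but fully self-contained within the lemma, whereas the paper's is shorter but leans on the prefix-occurrence observation being granted beforehand; both are valid, and each step of your contradiction argument checks out.
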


\begin{proof}
Since $ax \in \Prefix(aT)$, $x \in \Prefix(T)$.
Thus $x$ is left-maximal in $T$.
Assume, for a contradiction, that $x$ is not right-maximal in $T$.
Then there exists a non-empty string $\beta \in \Sigma^+$ such that $x\beta = \Long(\EqcR{x}_T)$,
which means that any occurrence of $x$ in $T$ is immediately followed by $\beta$.
Thus $ax$ is also immediately followed by $\beta$ in $aT$, however,
this contradicts the precondition that $ax \in \M(aT)$.
Thus $x$ is right-maximal in $T$.
Since $\EndPos(ax, aT) \subseteq \EndPos(x, T)$, every right-extension of $ax$ in $aT$ is also right-extensions of $x$ in $T$.
Consequently, we have $\D_{aT}(ax) \leq \D_{T}(x)$.
\end{proof}

It follows from Lemma~\ref{lem:node_created} that the out-degree of each new node for $ax$
in $\CDAWG(aT)$ does not exceed the out-degree of the node for $x$ in $\CDAWG(T)$.
Also, there is an injective mapping from a new node $ax$ in $\CDAWG(aT)$
to an existing node $x$ in $\CDAWG(T)$ by Lemma~\ref{lem:node_created}.
Thus $\FIns(T) \leq \size(T)$ for any string $T$.

In the sequel, we give a tighter bound $\FIns(T) \leq \size(T)-1$ if $\size(T) \geq 3$.
For this purpose, we pick the case where $x = \varepsilon$,
assume that $ax = a$ becomes a new node in $\CDAWG(aT)$,
and compare the out-degree of the source $\varepsilon$ of $\CDAWG(T)$
and the out-degree of the new node $a$ in $\CDAWG(aT)$.
We consider the cases with $\sigma = 2$ and with $\sigma \geq 3$ separately:

\vspace*{1pc}
\begin{lemma} \label{lemma:binary_a_source}
Let $\sigma = 2$.
If
\begin{enumerate}
\item $a \notin \M(T)$, 
\item $a \in \M(aT)$, and
\item there exists a string $x \in \M(T) \setminus \{\varepsilon, T\}$
  such that $ax \notin \M(T)$ and $ax \in \M(aT)$,
\end{enumerate}
then $\D_{aT}(a) < \D_{T}(\varepsilon)$.
\end{lemma}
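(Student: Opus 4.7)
Write $\Sigma = \{a, b\}$ (since $\sigma = 2$). The plan is to establish $\D_T(\varepsilon) = 2$ and $\D_{aT}(a) \le 1$ separately. For the former, note that $T$ cannot be unary: if $T = a^n$, then $a$ is both a prefix and a suffix of $T$ and so $a \in \M(T)$, contradicting hypothesis~1; if $T = b^n$, then in $aT$ the character $a$ occurs only at position~$1$, is followed only by $b$, and is not a suffix of $aT$, so $a \notin \M(aT)$, contradicting hypothesis~2. Hence both letters of $\Sigma$ appear in $T$ and $\D_T(\varepsilon) = 2$.

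I argue $\D_{aT}(a) \le 1$ by contradiction, assuming $\D_{aT}(a) = 2$ so that both $aa$ and $ab$ lie in $\Substr(aT)$, and case-splitting on $T[1]$. If $T[1] = b$, the prefix of $aT$ gives $ab$ but not $aa$, so $aa \in \Substr(T)$; the first $a$ in $T$, at some position $\ge 2$, is preceded by $b$, so $a$ is preceded by both $a$ and $b$ in $T$ and is thus left-maximal, and hypothesis~1 forces $a$ non-right-maximal. Then every $a$ in $T$ shares one following character, which must be $a$ since $aa \in \Substr(T)$, and propagating this from any $a$-position forward yields $T[n] = a$, contradicting the non-suffix condition $T[n] \ne a$. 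If instead $T[1] = a$, the prefix gives $aa$ and we need $ab \in \Substr(T)$; hypothesis~1, with $a$ left-maximal as prefix, forces $a$ non-right-maximal, so every $a$ in $T$ is followed by $b$ and $T[n] = b$. Thus $T = a b^{k_1} a b^{k_2} \cdots a b^{k_m}$ with each $k_i \ge 1$, and in particular $aa \notin \Substr(T)$.

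I then invoke hypothesis~3 in this block-form case to pick $x \in \M(T) \setminus \{\varepsilon, T\}$ with $ax \in \M(aT) \setminus \M(T)$ and derive a contradiction by splitting on $x[1]$. If $x[1] = a$, then $ax$ starts with $aa$; since $aa \notin \Substr(T)$, $ax$ is a new substring of $aT$ occurring only as a prefix, so $\BegPos(ax, aT) = \{1\}$. Every longer prefix of $aT$ also has $\BegPos = \{1\}$, so right-maximality of $ax$ forces $ax = aT$, i.e.\ $x = T$, which is excluded. If $x[1] = b$, then $ax$ starts with $ab$ and is not a prefix of $aT$; any longer left-equivalent $L \in \EqcL{ax}_T$ or longer right-extension $R \in \EqcR{ax}_T$ is a substring of $T$ that does not start with $aa$, hence is not a prefix of $aT$ either. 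Therefore $\EndPos(L, aT)$ and $\BegPos(R, aT)$ are the shifted versions of their $T$-counterparts, so $L \in \EqcL{ax}_{aT}$ and $R \in \EqcR{ax}_{aT}$, and $ax \notin \M(T)$ propagates to $ax \notin \M(aT)$, contradicting the choice of $x$. The main obstacle is this second subcase: one must exploit $aa \notin \Substr(T)$ to ensure that no witness of non-maximality of $ax$ in $T$ leaves the equivalence class of $ax$ when passing to $aT$.
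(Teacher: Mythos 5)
Your proof is correct, and its skeleton coincides with the paper's: establish $\D_T(\varepsilon)=2$ by ruling out unary $T$, assume $\D_{aT}(a)=2$ so that $aa,ab\in\Substr(aT)$, and split on $T[1]$. The differences are in how the two cases are closed. For $T[1]=b$, the paper deduces $ab\notin\Substr(T)$, pins $T$ down to $b^{n-1}a$, and then contradicts hypothesis~3; you instead derive a direct contradiction from hypotheses~1 and~2 alone (left-maximality of $a$ forces non-right-maximality, so every $a$ is followed by $a$, which propagates to $T[n]=a$ and contradicts $a\notin\Suffix(T)$) --- a cleaner route that shows hypothesis~3 is not needed in this branch. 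For $T[1]=a$, the paper invokes the previously established fact that any new node of $\CDAWG(aT)$ must occur as a prefix of $aT$, so $ax$ begins with $aa$ and occurs exactly once; you do not use that fact and instead sub-split on $x[1]$, reproducing the paper's argument when $x[1]=a$ and, when $x[1]=b$, re-deriving from scratch that the non-maximality witnesses for $ax$ in $T$ survive in $aT$ because neither $ax$ nor any longer equivalent string can be a prefix of $aT$ (as none of them contains $aa$). This makes your proof more self-contained at the cost of some length; the only point worth tightening is to note explicitly in that last subcase that $ax\in\Substr(T)$ (immediate, since $ax\in\M(aT)$ and $ax$ is not a prefix of $aT$), so that the witnesses you appeal to actually exist.
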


\begin{proof}
Let $\Sigma = \{a, b\}$.
We can exclude the case where $T = b^n$ due to the following reason:
Since $ab^i$ for each $1 \leq i < n$ is not maximal in $aT = ab^n$,
no new nodes are created in $\CDAWG(ab^n)$
(only a new edge labeled $ab^n$ from the source to the sink is created).

From now on, consider the case where $T$ contains both $a$ and $b$. This means that $\D_{T}(\varepsilon) = \sigma = 2$.
Since $a \in \M(aT)$, $a$ is a node of $\CDAWG(aT)$.
Assume, for a contradiction, that $\D_{aT}(a) = \D_{T}(\varepsilon)$.
We then have $\D_{aT}(a) = 2$, which means $aa, ab \in \Substr(aT)$.
There are two cases depending on the first character of $T$:
\begin{itemize}

\item If $T[1] = a$, then let $T = aw$.
Then, since $aT = aaw$, we have $ab \in \Substr(T)$.
Since $a \notin \M(T)$ (the first precondition),
$b$ is the only character that immediately follows $a$ in $T$,
meaning that $aa \notin \Substr(T)$.
Recall that the new node $ax$ must be a prefix of $aT = aaw$.
Since $x \neq \varepsilon$ (the third precondition), $|ax| \geq 2$,
and thus $aa$ is a prefix of $ax$.
However, since $aa \notin \Substr(T)$,
$aa$ occurs in $aT$ exactly once as a prefix and thus $ax$ occurs exactly once in $aT$.
This contradicts the third precondition that $ax$ is a new node in $\CDAWG(aT)$.

\item If $T[1] = b$, then we have that $ab \notin \Substr(T)$ by similar arguments as above. Thus $T$ must be of form $b^m a^{n-m}$ with $1 \leq m < n$.
Moreover, since $a \notin \M(T)$ and $a \in \M(aT)$ (the first and second preconditions), we have $T = b^{n-1}a$.
Then, for the edited string $aT = ab^{n-1}a$,
any new internal node $ax$ in $\CDAWG(aT)$ must be in form $ab^i$ with $1 \leq i < n$.
However, each $ax = ab^i$ occurs in $aT$ exactly once,
meaning that $\Long(\EqcR{ab^i}_{aT}) = aT$.
This contradicts the third precondition that $ax$ is a new node in $\CDAWG(aT)$.
\end{itemize}
Consequently, $\D_{aT}(a) < \D_{T}(\varepsilon)$.
\end{proof}

\vspace*{1pc}
\begin{lemma} \label{lemma:ternary_a_source}
Let $\sigma \geq 3$.
If $a \notin \M(T)$ and $a \in \M(aT)$,
then $\D_{aT}(a) < \D_{T}(\varepsilon)$.
\end{lemma}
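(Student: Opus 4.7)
The plan is to write $\D_T(\varepsilon)$ and $\D_{aT}(a)$ as cardinalities of character sets and exhibit a character present in the former but missing from the latter. Specifically, set $C=\{c\in\Sigma:c\in\Substr(T)\}$ and $D=\{c\in\Sigma:ac\in\Substr(aT)\}$; the CDAWG edge definition gives $\D_T(\varepsilon)=|C|$ and $\D_{aT}(a)=|D|$, and $D\subseteq C$ since each $c\in D$ either equals $T[1]\in C$ or satisfies $ac\in\Substr(T)$ (so $c\in C$). I use the hypothesis $\sigma\geq 3$ in its effective form $|C|\geq 3$, in the same spirit as the exclusion of $T=b^n$ in Lemma~\ref{lemma:binary_a_source}.

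I would first verify $a\in\Substr(T)$: otherwise $a$ has a unique occurrence in $aT$ (at position $1$), so $|D|\leq 1$, and $a\notin\Suffix(aT)$ because $T\neq\varepsilon$; hence $a\notin\RightM(aT)$, contradicting $a\in\M(aT)$. Given $a\in\Substr(T)$ and $a\notin\M(T)=\LeftM(T)\cap\RightM(T)$, either $a\notin\RightM(T)$ or $a\in\RightM(T)\setminus\LeftM(T)$, and I treat these two cases separately.

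In Case~1 ($a\notin\RightM(T)$), $a$ is followed in $T$ by a single fixed character $d$ and $a\notin\Suffix(T)$, so every character in $D$ equals $T[1]$ or $d$; this gives $|D|\leq 2<3\leq|C|$. In Case~2 ($a\in\RightM(T)$ and $a\notin\LeftM(T)$), $a$ is preceded in $T$ by a single fixed character $c$ and $a\notin\Prefix(T)$. The crucial claim is that $c\neq a$: if $c=a$, then the leftmost occurrence of $a$ in $T$ would sit at some position $i\geq 2$ (since $a\notin\Prefix(T)$) and would be preceded by $T[i-1]=c=a$, producing an earlier occurrence of $a$ at position $i-1$ and contradicting the minimality of $i$. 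Hence $aa\notin\Substr(T)$, which combined with $T[1]\neq a$ yields $aa\notin\Substr(aT)$, i.e., $a\notin D$; since $a\in C$, this gives $|D|\leq|C|-1<|C|$.

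The main obstacle I foresee is the Case~2 argument that the unique left predecessor $c$ of $a$ in $T$ cannot itself be $a$; once that leftmost-occurrence observation is in hand, the rest reduces to straightforward character counting. A secondary subtlety is justifying the use of $|C|\geq 3$ in Case~1, which relies on reading $\sigma\geq 3$ as asserting that $T$ itself contains at least three distinct characters (otherwise $a\notin\M(T)\land a\in\M(aT)$ can be handled in the style of Lemma~\ref{lemma:binary_a_source}).
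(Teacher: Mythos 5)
Your proof is correct and follows essentially the same route as the paper: the same case split on whether $a$ fails to be left-maximal or right-maximal in $T$, with the same key facts ($aa\notin\Substr(aT)$ in the one case, at most two right-extensions of $a$ in the other), merely phrased directly via the sets $C\supseteq D$ rather than by contradiction. Your leftmost-occurrence argument for why the unique left predecessor of $a$ cannot be $a$ itself is a welcome justification of a step the paper only asserts parenthetically.
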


\begin{proof}
By similar arguments to the proof for Lemma~\ref{lemma:binary_a_source},
we have that $T$ contains at least three distinct characters,
one of which is $a$. Thus $\D_{T}(\varepsilon) = \sigma \geq 3$.

Assume, for a contradiction, that $\D_{aT}(a) = \D_{T}(\varepsilon) = \sigma \geq 3$.
Since $a \notin \M(T)$ (i.e. $a$ is not maximal in $T$), we have the two following cases:
\begin{itemize}
\item If $a$ is not left-maximal in $T$, then $T[1] \neq a$ and there is a unique character $b$~($\neq a$) that immediately precedes $a$ in $T$, meaning that $aa \notin \Substr(T)$. Since $T[1] \neq a$, we also have $aa \notin \Substr(aT)$. Thus $\D_{aT}(a) < \sigma = \D_{T}(\varepsilon)$, a contradiction.

\item If $a$ is not right-maximal in $T$, then there is a unique character $b$ that immediately follows $a$ in $T$. The occurrence of $a$ as a prefix of $aT$ is followed by $T[1]$, and thus the number $\D_{aT}(a)$ of distinct characters following $a$ in $aT$ is at most $2 < \sigma = \D_{T}(\varepsilon)$, a contradiction.
\end{itemize}
Consequently, $\D_{aT}(a) < \D_{T}(\varepsilon)$. 
\end{proof}

By Lemmas~\ref{lemma:binary_a_source} and~\ref{lemma:ternary_a_source},
even if there appear new nodes $ax$ in $\CDAWG(aT)$
corresponding to all existing nodes $x$ in $\CDAWG(T)$,
we have a credit $\D_{T}(\varepsilon) - \D_{aT}(a) \geq 1$ in most cases.
The only exception is when $\sigma = 2$ and $\M(T) = \{\varepsilon, T\}$.
However, in this specific case $\CDAWG(T)$ consists only of the
two nodes (source and sink), namely $\size(T) = 2$.
Conversely, we have that the above arguments hold for any $\size(T) \geq 3$,
which leads to the following:

\vspace*{1pc}
\begin{lemma} \label{lemmma:new_nodes}
  For any string $T$ with $\size(T) \geq 3$, $\FIns(T) \leq \size(T)-1$.
\end{lemma}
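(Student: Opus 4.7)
The plan is to leverage Lemma~\ref{lem:node_created} to set up an injection from the new nodes of $\CDAWG(aT)$ into $\M(T)$, and then to locate one unit of slack that the naive bound $\FIns(T) \leq \size(T)$ fails to exploit. Concretely, sending each new node $ax$ of $\CDAWG(aT)$ to the existing node $x \in \M(T)$ is injective, and by the same lemma $\D_{aT}(ax) \leq \D_{T}(x)$; so $\FIns(T) \leq \sum_{x \in S} \D_{T}(x)$, where $S \subseteq \M(T)$ is the image of this injection. The task is to shave one from this sum.

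I would split on whether the single character $a$ is itself a new node of $\CDAWG(aT)$. If $a$ is \emph{not} a new node, then the unique preimage of $\varepsilon$ under the injection is absent, hence $\varepsilon \notin S$. Since $\size(T) \geq 3$ forces $T$ to be non-empty, $\D_{T}(\varepsilon) \geq 1$, and therefore $\FIns(T) \leq \size(T) - \D_{T}(\varepsilon) \leq \size(T) - 1$.

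If $a$ is a new node, then Lemmas~\ref{lemma:ternary_a_source} and~\ref{lemma:binary_a_source} are tailor-made to yield $\D_{aT}(a) < \D_{T}(\varepsilon)$: the former whenever $\sigma \geq 3$, and the latter whenever $\sigma = 2$ together with the existence of at least one additional new node $ax$ with $x \in \M(T) \setminus \{\varepsilon, T\}$. In either situation, the source contributes strictly less under $\D_{aT}$ than under $\D_{T}$, so the naive bound tightens by one and we obtain $\FIns(T) \leq \size(T) - 1$.

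The main obstacle is the leftover border case: $\sigma = 2$, $a$ is a new node, and no $x \in \M(T) \setminus \{\varepsilon, T\}$ contributes a new node, so the third hypothesis of Lemma~\ref{lemma:binary_a_source} fails. Here the image of the injection satisfies $S \subseteq \{\varepsilon, T\}$, and since $T$ is the sink of $\CDAWG(T)$ we have $\D_{T}(T) = 0$. Hence $\FIns(T) \leq \D_{aT}(a) + \D_{aT}(aT) \leq \D_{T}(\varepsilon) + \D_{T}(T) = \D_{T}(\varepsilon) \leq \sigma = 2$, and combining with the hypothesis $\size(T) \geq 3$ we conclude $\FIns(T) \leq 2 \leq \size(T) - 1$, completing the proof.
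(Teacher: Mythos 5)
Your proof is correct and follows essentially the same route as the paper: the injection from Lemma~\ref{lem:node_created} gives the naive bound, and the unit of slack is extracted at the source via Lemmas~\ref{lemma:binary_a_source} and~\ref{lemma:ternary_a_source}. Your explicit handling of the leftover binary case (bounding $\FIns(T) \leq \D_{aT}(a) + \D_{aT}(aT) \leq 2 \leq \size(T)-1$ directly) is in fact slightly more careful than the paper's, which dismisses that exception by asserting it forces $\M(T) = \{\varepsilon, T\}$ and hence $\size(T) = 2$.
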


\subsubsection{$\GIns(T)$: number of new branches from existing nodes}
The following lemma states that the out-degrees of most existing nodes
of $\CDAWG(T)$ do not change in $\CDAWG(aT)$,
except for a single unique node that can obtain a single new out-going edge in $\CDAWG(aT)$:

\vspace*{1pc}
\begin{lemma} \label{lemma:new_branches}
  For any $y \in \Substr(T)$ such that
  $y \in \M(T)$ and $y \in \M(aT)$, 
  $\D_{aT}(y) \in \{\D_{T}(y), \D_{T}(y)+1\}$.
  Also, there exists at most one substring $y$ with $\D_{aT}(y) = \D_{T}(y)+1$.
  Consequently $\GIns(T) \leq 1$.
\end{lemma}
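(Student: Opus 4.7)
The plan is to analyze where a new outgoing edge of an existing node can possibly come from, by comparing, for each $y \in \M(T)\cap\M(aT)$, the set of characters immediately following $y$ in $T$ with the same set in $aT$. Let $C_T(y) = \{T[i+1] \mid i \in \EndPos(y,T),\ i<|T|\}$, and define $C_{aT}(y)$ analogously. Because all members of $\EqcL{y}_T$ share the same $\EndPos$, they share the same set of immediately following characters, so the CDAWG out-degree satisfies $\D_T(y)=|C_T(y)|$ and $\D_{aT}(y)=|C_{aT}(y)|$.

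Next I would classify the occurrences of $y$ in $aT$. Any occurrence ending at some position $j\geq 2$ in $aT$ corresponds, under the shift $j\mapsto j-1$, to an occurrence in $T$ with the same immediately following character (or none, when $j=|aT|$); this yields $C_T(y)\subseteq C_{aT}(y)$, hence $\D_{aT}(y)\geq\D_T(y)$. The only genuinely new occurrence is at position $1$ of $aT$, which exists exactly when $y$ is a prefix of $aT$ (equivalently, $y[1]=a$ and $y[2..|y|]=T[1..|y|-1]$), and it contributes the single candidate character $T[|y|]$ to $C_{aT}(y)$. Thus $\D_{aT}(y)\leq\D_T(y)+1$, and equality forces $y$ to be a prefix of $aT$ with $T[|y|]\notin C_T(y)$.

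The crux is the uniqueness claim. Suppose two nodes $y_1,y_2\in\M(T)\cap\M(aT)$ both attain $\D_{aT}(y_i)=\D_T(y_i)+1$, with $|y_1|<|y_2|$. Both are prefixes of $aT$, so $y_1$ is a proper prefix of $y_2$; since $y_2$ is itself a prefix of $aT$, its $(|y_1|+1)$-th character equals $aT[|y_1|+1]=T[|y_1|]$. But $y_2\in\Substr(T)$, so at any occurrence of $y_2$ in $T$ the prefix $y_1$ appears there followed immediately by $T[|y_1|]$, placing $T[|y_1|]\in C_T(y_1)$ and contradicting that $T[|y_1|]$ was new for $y_1$. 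Hence at most one $y$ contributes a new edge, giving $\GIns(T)\leq 1$. The main obstacle I expect is the careful bookkeeping of occurrences — especially handling end-of-string positions so that $\D_T(y)=|C_T(y)|$ really holds — but once the candidate new character is pinned down as $T[|y|]$, the uniqueness argument from the fact that $y_1$ is a prefix of $y_2$ is short and clean.
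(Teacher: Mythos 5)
Your proof is correct and follows essentially the same route as the paper's: the only possible new following-character for an existing node $y$ comes from the prefix occurrence of $y$ in $aT$ (contributing $T[|y|]$), and two such nodes cannot both gain an edge because the longer one, being a substring of $T$, witnesses the shorter one's ``new'' character already occurring in $T$. The only blemishes are cosmetic indexing slips (``ending at position $j\geq 2$'' should refer to occurrences \emph{beginning} at position $\geq 2$, and $T[|y|]$ is undefined for $y=\varepsilon$, where the new character is $a$), neither of which affects the argument.
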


\begin{proof}
  Since $y \in \M(T)$ and $y \in \M(aT)$, $y$ is a node in both $\CDAWG(T)$ and $\CDAWG(aT)$. Then we have that:
\[
\D_{aT}(y) =
\begin{cases}
  \D_{T}(y)+1 & \mbox{if } y \in \Prefix(aT) \mbox{ and $yb$ occurs in $aT$ only as a prefix}, \\
  \D_{T}(y) & \mbox{otherwise},
\end{cases}
\]
where $b$ is the character that immediately follows the occurrence of $y$ as a prefix of $aT$, namely $b = T[|y|]$. 

Assume, for a contradiction, that there exist two distinct substrings
$x, y \in \M(T) \cap \M(aT)$ such that $\D_{aT}(x) = \D_{T}(x)+1$
and $\D_{aT}(y) = \D_{T}(y)+1$.
Since both $x$ and $y$ must be distinct prefixes of $aT$,
we can assume w.l.o.g. that $|x| < |y|$,
which means that $x$ is a proper prefix of $y$.
Thus the occurrence of $x$ as a prefix of $aT$ is immediately followed by
the character $c = y[|x|+1]$.
We recall that $y$ occurs in $T$ since $y \in \M(T)$.
Therefore there is an occurrence of $x$ in $T$ that is immediately followed by $c$, which leads to $\D_{aT}(x) = \D_{T}(x)$, a contradiction.
\end{proof}

\subsubsection{Putting all together}
Due to Lemma~\ref{lemmma:new_nodes} and Lemma~\ref{lemma:new_branches},
we have an upper bound $\size(T') - \size(T) \leq \FIns(T)+\GIns(T) \leq \size(T)-1 + 1 = \size(T)$ for $\sigma \geq 2$.
We remark that the equality holds only if both of the following conditions are satisfied:
\begin{enumerate}
\item[(a)] For any $x \in \M(T) \setminus \{\varepsilon\}$, $ax \notin \M(T)$, $ax \in \M(aT)$, and $\D_{aT}(ax) = \D_{T}(x)$;
\item[(b)] There exists a unique string $x \in \Substr(T)$ such that $\D_{aT}(x) = \D_{T}(x)+1$.
\end{enumerate}
However, in the next lemma, we show that no strings $x$ can
satisfy both Conditions (a) and (b) simultaneously:

\vspace*{1pc}
\begin{lemma} \label{lemma:reduce_one_edge}
  If $ax \notin \M(T)$ and $ax \in \M(aT)$,
  then $\D_{aT}(x) = \D_{T}(x)$.
\end{lemma}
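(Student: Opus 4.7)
The plan is a proof by contradiction: suppose $\D_{aT}(x) \ne \D_{T}(x)$ and derive a contradiction with $ax \in \M(aT)$. By Lemma~\ref{lemma:new_branches} the only alternative is $\D_{aT}(x) = \D_{T}(x) + 1$, and reading the case analysis inside the proof of Lemma~\ref{lemma:new_branches} pins down the sharp criterion for this to occur: (i) $x \in \Prefix(aT)$, and (ii) the extension $xb$, where $b = aT[|x|+1]$, occurs in $aT$ only as a prefix.

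Next I would combine (i) with $x \in \Prefix(T)$, which is immediate from the opening line of the proof of Lemma~\ref{lem:node_created}. A character-by-character comparison shows that any common prefix of $T$ and $aT$ must have the form $x = a^{k}$ with $T[1..k] = a^{k}$; substituting gives $b = a$ and $xb = a^{k+1} = ax$. Then (ii) says $ax$ occurs in $aT$ exclusively at position~$1$. But $ax \in \M(aT)$ demands right-maximality, and a substring with a single occurrence is right-maximal only when it is a suffix, forcing $ax = aT$ and hence $x = T = a^{|T|}$. This collapses the would-be counter-scenario to a unary $T$, which sits outside the nontrivial regime driving the surrounding upper bound (the unary case already yields the trivial additive change of $1$ handled at the end of Section~2, and Lemma~\ref{lemmma:new_nodes} is invoked only when $\size(T) \ge 3$), yielding the desired contradiction.

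The main obstacle, in my view, is the first step: Lemma~\ref{lemma:new_branches} is stated only as the inclusion $\D_{aT}(y) \in \{\D_{T}(y), \D_{T}(y)+1\}$, so the exact characterization of when the $+1$ branch fires has to be teased out of the piecewise definition inside its proof. Once that criterion is in hand, the unary collapse and the right-maximality contradiction follow essentially without further surprise.
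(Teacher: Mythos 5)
Your proposal is correct and follows the paper's own argument essentially step for step: reduce to the $+1$ case via Lemma~\ref{lemma:new_branches}, combine $x \in \Prefix(aT)$ with $x \in \Prefix(T)$ to force $x = a^{|x|}$ and $xb = ax$, then use the right-maximality of $ax$ to reach a contradiction. The only cosmetic difference is the endgame: the paper first excludes $x = T$ (via the standing assumption $\sigma \geq 2$) and then contradicts $b \notin \Sigma'$, whereas you let the single-occurrence condition collapse everything to the unary case and discharge it by the same standing assumption — logically the same dichotomy.
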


\begin{proof}
  Assume, for a contradiction, that $\D_{aT}(x) \neq \D_{T}(x)$.
  By Lemma~\ref{lemma:new_branches} we have that $\D_{aT}(x) = \D_{T}(x)+1$.
  Then, it also follows from the proof of Lemma~\ref{lemma:new_branches} that
  $x$ is a prefix of $aT$
  and the character $b = T[|x|]$ that immediately
  follows the prefix occurrence of $x$ in $aT$ differs from any other characters  that immediately follow the occurrences of $x$ in $T$.
  Namely, we have $b \notin \Sigma' = \{T[i+1] \mid i \in \EndPos(x, T)\}$.
  Moreover, by Lemma~\ref{lem:node_created}, $ax$ is also a prefix of $aT$.
  This means that $x$ is a prefix of $ax$, and hence $ax = xb$,
  which means that $x = a^{|x|}$ and $a = b$.
  Because $\sigma \geq 2$, $T \neq x$.
  Since $ax \in \M(aT)$ and $x \neq T$, $ax$~($= xb$) occurs in $T$.
  This means that $b = c$ for some $c \in \Sigma'$, a contradiction.
  Thus, $\D_{aT}(x) = \D_{T}(x)$.
\end{proof}

We have $\size(T) \geq 3$ only if $|T| \geq 3$. By wrapping up Lemma~\ref{lemmma:new_nodes}, Lemma~\ref{lemma:new_branches}, and Lemma~\ref{lemma:reduce_one_edge}, we obtain the main result of this subsection:

\vspace*{1pc}
\begin{theorem}
For any $n \geq 3$ and $\size \geq 3$, $\ASIns(\size,n) \leq \size-1$.
\end{theorem}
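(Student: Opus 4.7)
The plan is to combine Lemma~\ref{lemmma:new_nodes}, Lemma~\ref{lemma:new_branches}, and Lemma~\ref{lemma:reduce_one_edge} via the decomposition $\size(aT) - \size(T) \leq \FIns(T) + \GIns(T)$. The first two lemmas already give the preliminary upper bound $(\size(T)-1)+1 = \size(T)$, so the only task is to shave off one additional edge.

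I would proceed by contradiction. Suppose the equality $\size(aT) - \size(T) = \size(T)$ is attained; then necessarily $\FIns(T) = \size(T)-1$ and $\GIns(T)=1$, and hence conditions (a) and (b) noted just above the statement must hold simultaneously. Under condition (a), every $x \in \M(T) \setminus \{\varepsilon\}$ satisfies the hypotheses $ax \notin \M(T)$ and $ax \in \M(aT)$ of Lemma~\ref{lemma:reduce_one_edge}, which immediately yields $\D_{aT}(x) = \D_T(x)$ for every such $x$. Condition (b) supplies a unique $x^\ast \in \M(T) \cap \M(aT)$ with $\D_{aT}(x^\ast) = \D_T(x^\ast) + 1$, and combining this with the previous display forces the witness $x^\ast$ to be $\varepsilon$.

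The main obstacle is then to rule out $x^\ast = \varepsilon$. The identity $\D_{aT}(\varepsilon) = \D_T(\varepsilon) + 1$ says exactly that $a$ does not occur in $T$, and therefore every string of the form $ax$ with $x \in \Substr(T)$ has its unique occurrence in $aT$ at position $1$. Such a singleton-occurring string is right-maximal in $aT$ only when it equals $aT$ itself, so condition (a) collapses to $\M(T) \setminus \{\varepsilon\} \subseteq \{T\}$, i.e.\ $\M(T) = \{\varepsilon, T\}$. But then the only new node of $\CDAWG(aT)$ is the sink $aT$, which has out-degree $0$, so $\FIns(T) = 0$, contradicting $\FIns(T) = \size(T) - 1 \geq 2$. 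The hypothesis $\size(T) \geq 3$ is used precisely here to guarantee $\size(T) - 1 \geq 2$, and the hypothesis $n \geq 3$ is what allows $\size(T) \geq 3$ to occur. This completes the plan; the conceptually delicate step is recognizing that Lemma~\ref{lemma:reduce_one_edge} decouples $\FIns$ and $\GIns$ exactly where each of them could individually be tight, while the rest is a clean case analysis.
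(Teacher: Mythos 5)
Your proof is correct and follows essentially the same route as the paper: the decomposition into $\FIns$ and $\GIns$, the bound $\size(T)$ from Lemmas~\ref{lemmma:new_nodes} and~\ref{lemma:new_branches}, and the use of Lemma~\ref{lemma:reduce_one_edge} to show that conditions (a) and (b) cannot hold simultaneously. Your explicit treatment of the residual case $x^{\ast}=\varepsilon$ (where $a$ does not occur in $T$, forcing $\M(T)=\{\varepsilon,T\}$ and hence $\FIns(T)=0$) is a detail the paper leaves implicit, and it is handled correctly.
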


\subsection{Lower bound for $\ASIns(\size, n)$ on CDAWGs}
\label{sec:insert_lowerbound}

Below, we present a matching lower bound for
$\ASIns(\size, n)$ for the case of left-end insertions:

\vspace*{1pc}
\begin{theorem} \label{theo:lowerbound_insertion}
  There exists a family of strings $T$ such that
  $\size(T')-\size(T) = \size(T)-1$,
  where $T' = bT$ with $b \in \Sigma$.
  Therefore $\ASIns(\size, n) \geq \size-1$.
\end{theorem}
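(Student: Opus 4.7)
The strategy is to exhibit, for arbitrarily large values of $\size$, an explicit string $T$ together with a character $b$ satisfying $\size(bT)-\size(T)=\size(T)-2$. I would first verify small witnesses by direct computation. For instance, taking $T=bab$ one finds $\M(T)=\{\varepsilon,b,T\}$ with out-degrees $2,1,0$, so $\size(T)=3$, while $bT=bbab$ gives $\M(bT)=\{\varepsilon,b,bT\}$ with out-degrees $2,2,0$ and $\size(bT)=4$; likewise $T=bbab$ gives $\size(T)=4$ and $\size(bT)=6$ with the single new node $bb$ contributing two edges. In each case the difference is exactly $\size(T)-2$.

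To lift the bound to unbounded $\size$, I would seek a parametric family $\{T_k\}_{k\ge k_0}$ over a small alphabet such that $\size(T_k)\to\infty$ and the CDAWG of $bT_k$ nearly doubles relative to that of $T_k$. A natural candidate takes the form $T_k=\alpha\cdot\beta_k$, with $\alpha$ a short fixed string and $\beta_k$ a carefully chosen repetitive factor, arranged so that every maximal substring $x\in\M(T_k)$ admits an internal occurrence in $T_k$ preceded by $b$ with the same follower diversity as its other occurrences; this is precisely the combinatorial condition enforcing $\D_{bT_k}(bx)=\D_{T_k}(x)$ in Lemma~\ref{lem:node_created}.

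The verification would then proceed in four steps: (i) enumerate $\M(T_k)$ by characterizing left- and right-maximality of each substring; (ii) enumerate $\M(bT_k)$, identifying via Lemma~\ref{lem:node_created} the new nodes of the form $bx$ (with $x\in\M(T_k)$); (iii) compute $\D_{T_k}(\cdot)$ and $\D_{bT_k}(\cdot)$ at every node, noting that by Lemma~\ref{lemma:new_branches} at most one existing node can gain a new outgoing edge; (iv) sum the out-degrees and verify the claimed equality $\size(bT_k)=2\size(T_k)-2$.

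The main obstacle is that the construction must saturate the upper-bound lemmas almost precisely: essentially every $x\in\M(T_k)$ must produce a new node $bx\in\M(bT_k)$ with $\D_{bT_k}(bx)=\D_{T_k}(x)$, so that $\FIns(T_k)=\size(T_k)-1$ as in Lemma~\ref{lemmma:new_nodes}, while Lemma~\ref{lemma:reduce_one_edge} prevents this equality from coexisting with the simultaneous $+1$ from $\GIns$ — producing the irreducible $-2$ gap. Designing the family so that these near-equalities hold for arbitrarily large $k$, rather than plateauing at a constant gain as in simpler constructions such as $T=b^{k}ab$ (where $\size(bT)-\size(T)$ remains $2$ while $\size(T)$ grows), is the combinatorial heart of the proof.
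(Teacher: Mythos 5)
Your proposal correctly identifies what a witness family must do --- saturate Lemma~\ref{lem:node_created} so that essentially every $x\in\M(T)$ spawns a new node $bx$ with $\D_{bT}(bx)=\D_T(x)$ --- and your two small witnesses ($T=bab$ and $T=bbab$) do check out. But the proof has a genuine gap: you never actually exhibit the parametric family. The theorem's conclusion $\ASIns(\size,n)\geq\size-2$ is a statement about the sensitivity function for growing $\size$ and $n$, so two fixed instances with $\size(T)=3$ and $\size(T)=4$ do not establish it; and you yourself concede that constructing a family where the near-equalities persist for arbitrarily large $k$ (rather than plateauing, as for $b^k ab$) is ``the combinatorial heart of the proof.'' That heart is exactly what is missing. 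The ansatz $T_k=\alpha\cdot\beta_k$ with $\beta_k$ repetitive is pointing in the right direction but is not a construction.

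For comparison, the paper's proof takes $T=(ab)^m c(ab)^{m+1}$ over $\{a,b,c\}$ and prepends $b$. Here $\M(T)=\{\varepsilon,ab,(ab)^2,\dots,(ab)^m,T\}$ with $\size(T)=2m+3$, and in $T'=bT$ every internal node $(ab)^i$ acquires a twin $b(ab)^{i-1}\cdot ab$-type node $b(ab)^i$ (plus the new node $b$), each occurring both as a prefix of $T'$ and internally before and after the separator $c$, which is precisely what keeps the out-degrees at $2$ and yields $\size(T')=4m+4=2\size(T)-2$. The separator $c$ together with the length mismatch between the two $(ab)$-blocks is the device that prevents the plateau you observed; without it, the new prefix nodes collapse into old equivalence classes. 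Your verification plan (steps (i)--(iv)) would go through for this family, but as submitted the argument is a program for a proof, not a proof.
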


\begin{proof}
Consider string
$$T=(ab)^{m+1} c(ab)^m,$$
where $a, b, c \in \Sigma$. 
We have that
$$\M(T)=\{\varepsilon, ab, (ab)^2,...,(ab)^m, T\}.$$
Then, since $\D_T(\varepsilon) = 3$, $\D_T((ab)^i) = 2$ for every $1 \leq i \leq m$, and $\D_T(T) = 0$,
we have $\size(T) =2m+3$.

Let us now prepend character $b$ to $T$ and obtain
$$T' = b(ab)^{m+1} c(ab)^m.$$
We have that
\begin{eqnarray*}
  \M(T') & = & \{\varepsilon,ab,(ab)^2,...,(ab)^m,b,bab,b(ab)^2,...,b(ab)^m,T'\} \\
  & = & \left( \M(T) \setminus \{T\} \right) \cup \{b,bab,b(ab)^2,...,b(ab)^m\} \cup \{T'\},
\end{eqnarray*}
and that $\D_{T'}(\varepsilon)=3$,
$\D_{T'}((ab)^i) = 2$ for every $1 \leq i \leq m$,
$\D_{T'}(b(ab)^i) = 2$ for every $0 \leq i\leq m$, and $\D_{T'}(T') = 0$
(see Figure~\ref{fig:insert_lowerbound} for illustration).
Thus $\size(T') = 4m+5 = 2(2m+3)-1 = 2\size(T)-1$ which shows the claim
$\ASIns(\size,n) \geq \size - 1$.
\end{proof}

\begin{figure}[!h]
  \centering
  \includegraphics[keepaspectratio,scale=0.42]{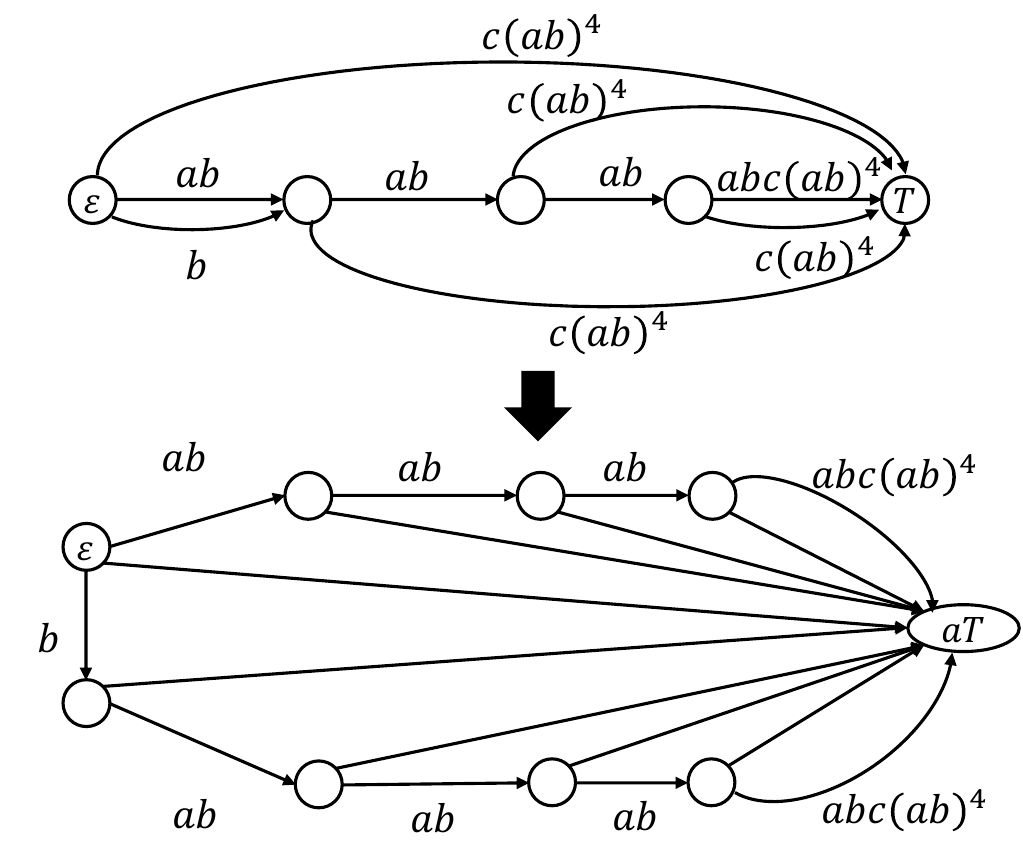}
  \caption{Illustration for the CDAWGs of strings $T=(ab)^3 abc(ab)^3$ and
    $T'= bT=b(ab)^3 abc (ab)^3$ with $m = 3$. The omitted edge labels are all $c(ab)^4$.
  Observe that $\size(T) = 9$ and $\size(T') = 17$, and hence $\size(T')-\size(T) = 8 = \size(T)-1$ with this left-end insertion.}
  \label{fig:insert_lowerbound}
\end{figure}

\section{Sensitivity of CDAWGs with left-end deletions}
\label{sec:deletions}

In this section we investigate the worst-case additive sensitivity
$\ASDel(\size, n)$ of $\CDAWG(T)$ when $T[1]$ is deleted from
the original input string $T$ of length $n$.

\subsection{Upper bound for $\ASDel(\size, n)$ on CDAWGs}

Let $a = T[1]$ be the first character of string $T$.
Let $T = aS$ and $T' = S$, and we consider left-end deletion $aS \Rightarrow S$.
Since deleting the left-end character from $T$ never increases
the right-contexts of any substring in $S$,
it suffices for us to consider $\FDel(T) = \FDel(aS)$,
the total out-degrees of new nodes that appear in $\CDAWG(T') = \CDAWG(S)$,
namely $\size(S) - \size(aS) \leq \FDel(aS)$.

%
Let $x$ be a new node in $\CDAWG(S)$. We have the following:

\vspace*{1pc}
\begin{lemma} \label{lem:node_created_del}
  If $x \notin \M(aS)$ and $x \in \M(S)$, 
  then $x \in \Prefix(S)$ and $ax \in \M(aS)$.
  Also, $\D_{S}(x) = \D_{aS}(ax)$.
\end{lemma}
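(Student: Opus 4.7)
My plan is to track how left- and right-maximality of $x$ change as we pass from $S$ to $aS$, and then argue that the loss of maximality in $aS$ forces $x$ to be a prefix of $S$, with $ax$ taking over as the longest representative of its left-equivalence class in $aS$.

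First I would verify that $x$ remains in $\RightM(aS)$: the witnesses to $x \in \RightM(S)$ (either two distinct right-extensions or a suffix occurrence) survive in the superstring $aS$. Since $x \notin \M(aS)$ but $x \in \RightM(aS)$, the failure must be $x \notin \LeftM(aS)$, that is, $x \notin \Prefix(aS)$ and every occurrence of $x$ in $aS$ is preceded by one and the same character. Now $x \in \LeftM(S)$ yields either $x \in \Prefix(S)$ or two distinct characters preceding $x$ in $S$; the latter would persist in $aS$, contradicting $x \notin \LeftM(aS)$, so $x \in \Prefix(S)$. The prefix occurrence of $x$ in $S$ sits at position $2$ of $aS$, preceded by the inserted $a$, so the unique preceding character must be $a$ and therefore $\EndPos(x, aS) = \EndPos(ax, aS)$.

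Next, to see $ax \in \M(aS)$: the conclusion $x \in \Prefix(S)$ gives $ax \in \Prefix(aS)$, hence $ax \in \LeftM(aS)$. From $\EndPos(x, aS) = \EndPos(ax, aS)$, the characters following $ax$ in $aS$ coincide with those following $x$ in $aS$, so right-maximality of $x$ in $aS$ transfers to $ax$. The only subtle case is $x \in \Suffix(aS)$ with at most one right-extension, but then the character preceding the suffix occurrence of $x$ is $a$ by the property we just derived, making $ax$ a suffix of $aS$ as well.

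For the out-degree equality $\D_S(x) = \D_{aS}(ax)$, I would show that $x$ has the same set of right-extensions in $S$ and in $aS$. A new right-extension in $aS$ could only arise from a new occurrence of $x$ at position $1$ of $aS$, which combined with $x \in \Prefix(S)$ would force $x = a^{|x|}$ and hence $x \in \Prefix(aS)$, contradicting $x \notin \LeftM(aS)$. Therefore the occurrences of $x$ in $aS$ are in one-to-one position-shifted correspondence with those in $S$ and induce the same right-extension set; combining with the coincidence of right-extensions of $x$ and $ax$ in $aS$ gives the claim. The main obstacle I expect is exactly this final step, where the edge case $x = a^{|x|}$ has to be excluded by appealing back to the already-established $x \notin \Prefix(aS)$.
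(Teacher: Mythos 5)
Your proof is correct and follows essentially the same route as the paper's: show that $x$ stays right-maximal in $aS$ and therefore must lose left-maximality, conclude $x \in \Prefix(S)$, and use $\EndPos(ax,aS) = \EndPos(x,aS)$ (in position-shifted bijection with $\EndPos(x,S)$) to obtain both $ax \in \M(aS)$ and the out-degree equality. Your treatment of the edge cases (the suffix witness for right-maximality of $ax$, and excluding a prefix occurrence of $x$ in $aS$ when counting right-extensions) is sound and in fact slightly more explicit than the paper's.
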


\begin{proof}
  Since $x \notin \M(aS)$, $x$ is either not left-maximal or not right-maximal in $aS$.
  If $x$ is not right-maximal in $aS$, 
  then $x$ is also not right-maximal in $S$, hence $x \notin \M(S)$.
  However, this contradicts the precondition $x \in \M(S)$.
  Thus $x$ is not left-maximal in $aS$.
  Then, there exists a non-empty unique string $\alpha \in \Sigma^+$ such that $\alpha x = \Long(\EqcL{x}_{aS})$,
  which means that any occurrence of $x$ in $aS$ is immediately preceded by $\alpha$.
  Assume, for a contradiction, that $x \notin \Prefix(S)$.
  Since $x \in \M(S)$, $x = \Long(\EqcL{x}_{S}) = \Long(\EqcL{x}_{aS})$, however,
  this contradicts that $\alpha$ is a non-empty string.
  Thus $x \in \Prefix(S)$, and hence $ax \in \Prefix(aS)$.
  Since $ax \in \Prefix(aS)$ and $x$ is right-maximal in $aS$,
  $ax$ is a maximal string of $aS$. Thus $ax \in \M(aS)$.
  
  Then, we have that $x$ is not left-maximal in $aS$, which also means that $x \notin \Prefix(aS)$. Since $ax \in \Prefix(aS)$,
  $\EndPos(ax, aS) = \EndPos(x, aS) = \EndPos(x, S)$ holds.
  Consequently we have that $\D_{aS}(ax) = \D_{S}(x)$.
\end{proof}

By Lemma~\ref{lem:node_created_del}, the out-degree of each new node for $x$ in $\CDAWG(S)$ does not exceed the out-degree of the node for $ax$ in $\CDAWG(aS)$.
Also by Lemma~\ref{lem:node_created_del},
there is an injective mapping from a new node $x$ in $\CDAWG(S)$
to an existing node $ax$ in $\M(aS) \setminus \{\varepsilon\}$.
Since $\D_{aS}(\varepsilon) = \sigma \geq 2$,
the sum of out-degrees of all new nodes in $\CDAWG(S)$ is at most $\size(T) - \sigma = \size(T) - 2$.

Furthermore, we have another lemma for the existing nodes.

Let $x \in \Prefix(T)$ be the longest repeating prefix of $T$
such that $x$ occurs at least twice in $T$.
Since $x \in \Prefix(T)$, $x$ is left-maximal in $T$.
It also follows that $x$ is right-maximal in $T$,
since otherwise there is a non-empty string $\beta \in \Sigma^+$
such that $x\beta = \Long(\EqcR{x}_{T})$.
But this contradicts that $x$ is the longest repeating prefix of $T$.
Thus we have:

\vspace*{1pc}
\begin{lemma} \label{cor:LRP_maximal_repeat}
The longest repeating prefix of a string $T$ is a maximal repeat of $T$.
\end{lemma}

\vspace*{1pc}
Let $z$ be the longest repeating prefix of string $aS$,
where $a \in \Sigma$ and $S \in \Sigma^*$.
By Lemma~\ref{cor:LRP_maximal_repeat}, we have $z \in \M(aS)$.
We have the following lemma:

\vspace*{1pc}
\begin{lemma} \label{lem:egde_deleted_del}
 If $z$ is the longest repeating prefix of $aS$ and $z \in \M(S)$, 
 then $\D_{S}(z) < \D_{aS}(z)$.
\end{lemma}

\begin{proof}
  Let $b$ be the character that immediately follows the prefix $z$ in $aS$,
  namely $zb \in \Prefix(aS)$.
  Since $z$ is the longest repeating prefix of $aS$,
  $zb$ appears in $aS$ only as a prefix.
  Since $z$ is a maximal repeat of $aS$
  (by Lemma~\ref{cor:LRP_maximal_repeat}),
  $z$ is the longest string represented by the node $\Long(\EqcL{z}_{aS})$
  of $\CDAWG(aS)$.
  Thus $\CDAWG(aS)$ has an edge $e = (\EqcL{z}_{aS}, (aS)[|z|..n], \EqcL{aS}_{aS})$,
  where $S[|z|] = b$.
  Since $zb$ does not appear in $S$,
  the edge $e$ is removed when we delete the first character $a$ from $aS$.
  Thus, $\D_{S}(z) < \D_{aS}(z)$.
  We note that the above argument is valid also when $z = \varepsilon$,
  in which case $b = a$ and the out-edge beginning with $b$ is removed
  from the source of the CDAWG.
\end{proof}

By Lemma~\ref{lem:egde_deleted_del}, at least one edge must be deleted from $\CDAWG(T)$ after a left-end deletion on any string $T = aS$. 
Recalling that the sum of out-degrees of all new nodes does not exceed $\size(T) - 2$, we have:

\vspace*{1pc}
\begin{theorem}
  For any $n$, $\ASDel(\size,n) \leq \size-3$.
\end{theorem}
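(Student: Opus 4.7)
The plan is to expand the sketch that is already implicit in the text preceding the theorem, using Lemma~\ref{lem:node_created_del} to inject the new nodes of $\CDAWG(S)$ into the non-source nodes of $\CDAWG(aS)$ and to transport their out-degrees. First I would argue that it suffices to bound $\FDel(aS)$. For any node $v$ that is already present in both CDAWGs, every character that follows an occurrence of $v$ in $S$ also follows some occurrence of $v$ in $aS$ (the positions of $v$ in $S$ are a subset, shifted by one, of the positions of $v$ in $aS$), so $\D_{S}(v) \le \D_{aS}(v)$. Hence the edge count of $\CDAWG(S)$ exceeds that of $\CDAWG(aS)$ only through edges leaving new nodes, and $\size(S) - \size(aS) \le \FDel(aS)$.

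Next I would use Lemma~\ref{lem:node_created_del}: for each new node $x \in \M(S) \setminus \M(aS)$ the string $ax$ is a node of $\CDAWG(aS)$ with $\D_{aS}(ax) = \D_{S}(x)$, and $ax \neq \varepsilon$ since it begins with $a$. The assignment $x \mapsto ax$ is obviously injective, so
\[
  \FDel(aS) \;=\; \sum_{x \in \M(S) \setminus \M(aS)} \D_{S}(x) \;=\; \sum_{x \in \M(S) \setminus \M(aS)} \D_{aS}(ax) \;\le\; \sum_{v \in \M(aS) \setminus \{\varepsilon\}} \D_{aS}(v) \;=\; \size(aS) - \D_{aS}(\varepsilon).
\]

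Finally, under the standing assumption $\sigma \ge 2$ (the non-unary case fixed in the preliminaries), the source of $\CDAWG(aS)$ has out-degree $\D_{aS}(\varepsilon) \ge 2$. Combining the two inequalities yields $\size(S) - \size(aS) \le \size(aS) - 2$, which is the desired bound on $\ASDel(\size,n)$. There is no genuine obstacle here once Lemma~\ref{lem:node_created_del} is in hand; the only care needed is to make sure the image of the injection avoids $\varepsilon$ (guaranteed because $ax$ starts with $a$) and to invoke $\D_{aS}(\varepsilon) \ge 2$ to turn the $-\sigma$ term into the stated $-2$.
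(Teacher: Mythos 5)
Your proposal is correct and follows essentially the same route as the paper: reduce to bounding $\FDel(aS)$, use Lemma~\ref{lem:node_created_del} to injectively map each new node $x$ to the non-source node $ax$ of $\CDAWG(aS)$ with $\D_{aS}(ax)=\D_S(x)$, and then apply $\D_{aS}(\varepsilon)=\sigma\ge 2$ to obtain $\size(S)\le 2\size(aS)-2$.
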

  
\subsection{Lower bound for $\ASDel(\size, n)$ on CDAWGs}
\label{sec:delete_lowerbound}

The next lower bound for $\ASDel(\size, n)$ holds.

\vspace*{1pc}
\begin{theorem} \label{theo:lowerbound_deletion}
  There exists a family of strings $T$ such that
  $\size(S)-\size(T) = \size(T)-4$,
  where $T = aS$ with $a \in \Sigma$.
  Therefore $\ASDel(\size, n) \geq \size-4$.
\end{theorem}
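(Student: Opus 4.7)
The plan is to reuse the family $T_m = (ab)^m c (ab)^{m+1}$ over $\Sigma = \{a,b,c\}$ from the proof of Theorem~\ref{theo:lowerbound_insertion}, but now perform the dual operation: left-end deletion, yielding $S_m = T_m[2..|T_m|] = b(ab)^{m-1} c (ab)^{m+1}$. The value $\size(T_m) = 2m+3$ is already available from the insertion proof, via the enumeration $\M(T_m) = \{\varepsilon, ab, (ab)^2, \ldots, (ab)^m, T_m\}$ with out-degrees $3, 2, 2, \ldots, 2, 0$ respectively.

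The bulk of the work is to establish $\size(S_m) = 4m+2$, which rearranges to $\size(S_m) - \size(T_m) = 2m-1 = \size(T_m) - 4$. I plan to show that
\[
\M(S_m) = \{\varepsilon\} \cup \{(ab)^i : 1 \leq i \leq m\} \cup \{b(ab)^i : 0 \leq i \leq m-1\} \cup \{S_m\},
\]
a set of $2m+2$ nodes, by a routine case analysis of $\BegPos$ and $\EndPos$ in $S_m$. The out-degrees will turn out to be $3$ at the source (three distinct characters $a, b, c$); $2$ at each $(ab)^i$ with $1 \leq i \leq m-1$ and each $b(ab)^i$ with $0 \leq i \leq m-1$ (each followed by exactly the two characters $a$ and $c$); $1$ at $(ab)^m$; and $0$ at the sink. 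Summing gives $3 + 2(m-1) + 1 + 2m + 0 = 4m+2$, as desired.

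The main obstacle is identifying the out-degree of $(ab)^m$ as $1$ rather than $2$. Because $(ab)^m$ has length $2m$ while the first block $b(ab)^{m-1}$ of $S_m$ has length only $2m-1$, the substring $(ab)^m$ cannot fit in the first block, so it is never followed by the unique character $c$. Its only two occurrences both lie inside the trailing $(ab)^{m+1}$ block: one is followed by $a$, and the other is the very suffix of $S_m$ (followed by nothing, which contributes no out-edge per the definition of $\mathsf{E}_T$, since edge labels are required to lie in $\Sigma^+$). Missing this asymmetry between the two $(ab)$-blocks around $c$ would throw the final arithmetic off by $2$ and give only the weaker $\size-2$ lower bound matching the insertion case. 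Once this single node's out-degree is handled correctly, the remainder of the enumeration proceeds in direct analogy with the commented-out calculation in the proof of Theorem~\ref{theo:lowerbound_insertion}, and the final identity $\size(S_m) - \size(T_m) = \size(T_m) - 4$ follows by arithmetic, valid for every $m \geq 1$.
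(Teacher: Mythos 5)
Your proposal is correct and follows essentially the same approach as the paper's proof: the paper uses the mirror-image instance $T=(ab)^{m+1}c(ab)^m$ (deleting the leading $a$ to get $b(ab)^{m}c(ab)^{m}$), but the enumeration of $\M$ before and after deletion, the key observation that exactly one node $(ab)^m$ has its out-degree drop to $1$, and the final arithmetic $\size(S)=4m+2=2\size(T)-4$ are the same. Your counts $\size(T_m)=2m+3$ and $\size(S_m)=4m+2$ check out for all $m\geq 1$.
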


\begin{proof}
Consider string
$$T=(ab)^{m+1} c(ab)^m,$$
where $a, b, c \in \Sigma$. 
We have that
$$\M(T)=\{\varepsilon, ab, (ab)^2,...,(ab)^m, T\}.$$
Then, since $\D_T(\varepsilon) = 3$, $\D_T((ab)^i) = 2$ for every $1 \leq i \leq m$, and $\D_T(T) = 0$,
we have $\size(T) =2m+3$.

Let us delete the first character $a =T[1]$ from $T$ and obtain
$$T' = b(ab)^m c(ab)^m.$$
We have that
\begin{eqnarray*}
  \M(T') & = & \{\varepsilon,ab,(ab)^2,...,(ab)^m,b,bab,b(ab)^2,...,b(ab)^{m-1},T'\} \\
  & = & \left( \M(T) \setminus \{T\} \right) \cup \{b,bab,b(ab)^2,...,b(ab)^{m-1}\} \cup \{T'\},
\end{eqnarray*}
and that $\D_{T'}(\varepsilon)=3$,
$\D_{T'}((ab)^i) = 2$ for every $1 \leq i \leq m-1$,
$\D_{T'}(b(ab)^i) = 2$ for every $0 \leq i\leq m-1$,
$\D_{T'}((ab)^m) = 1$,
and $\D_{T'}(T') = 0$
(see Figure~\ref{fig:delete_lowerbound} for illustration).
Thus $\size(T') = 4m+2 = 2(2m+3)-4 = 2\size(T)-4$ which shows the claim
that $\ASDel(\size, n) \geq \size-4$.
\end{proof}

\begin{figure}[h!]
  \centering
  \includegraphics[keepaspectratio,scale=0.42]{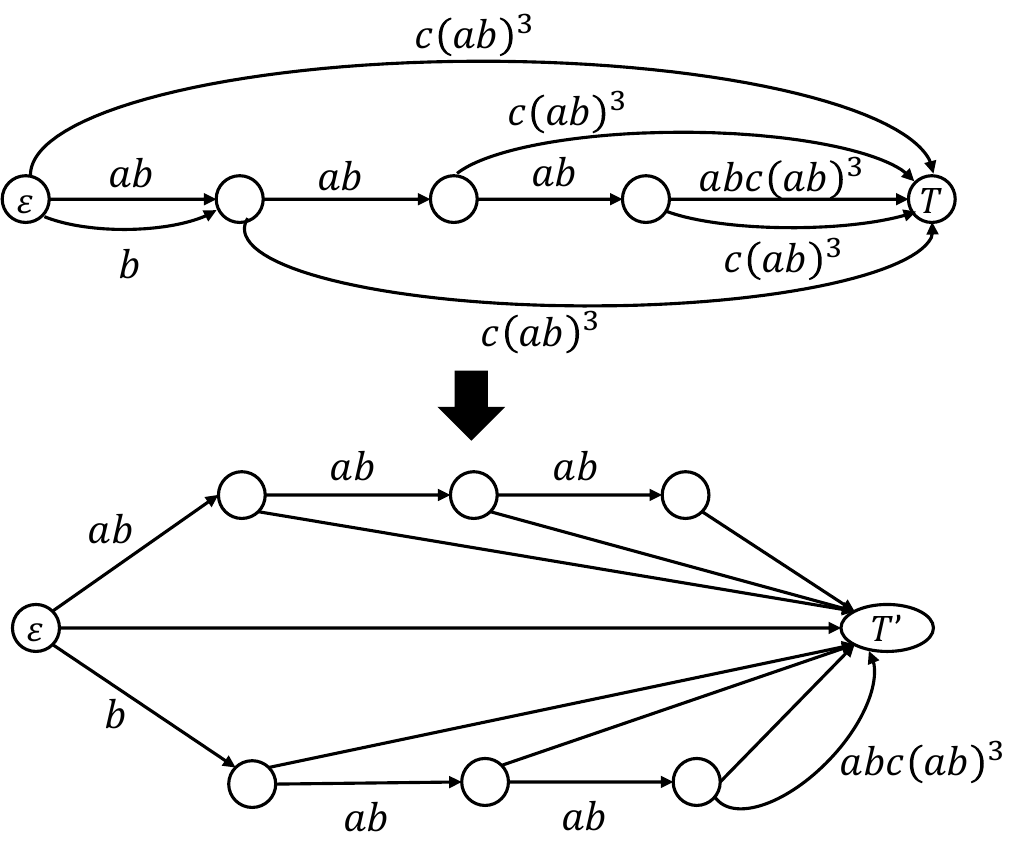}
  \caption{Illustration for the CDAWGs of strings $T=(ab)^4 abc(ab)^3$ and
   $T'= T[2..n]=b(ab)^3 c (ab)^3$ with $m = 3$. The omitted edge labels are all $c(ab)^3$. Observe that $\size(T) = 9$, $\size(T') = 14$, and hence $\size(T')-\size(T) = 5 = \size(T)-4$ with this left-end deletion.}
  \label{fig:delete_lowerbound}
\end{figure}

\section{Sensitivity of CDAWGs with left-end substitutions}

We consider the worst-case additive sensitivity
$\ASSub(\size, n)$ of $\CDAWG(T)$ when $T[1]$ is substituted by a new character $b \neq T[1]$, i.e. $T' = bT[2..n]$.

\subsection{Upper bound for $\ASSub(\size, n)$ on CDAWGs}

Similarly to the case of insertions,
we separate $\size(T') - \size(T)$ into the two following components
$\FSub(T)$ and $\GSub(T)$ such that
\begin{itemize}
\item $\FSub(T)$ is the total out-degrees of new nodes that appear in $\CDAWG(T')$;
\item $\GSub(T)$ is the total number of new out-going edges of nodes that already exist in $\CDAWG(T)$.
\end{itemize}
We regard a substitution as a sequence of a deletion and an insertion,
i.e. two consecutive edit operations such that $aS~(= T) \Rightarrow S \Rightarrow bS~(= bT[2..n] = T')$.

\subsubsection{$\FSub(T)$: total out-degrees of new nodes}

Let $u$ be a new node in $\CDAWG(bS)$ that does not exist in $\CDAWG(aS)$,
namely $u \in \M(bS)$ and $u \notin \M(aS)$.
We categorize each new node $u$ to the two following types $u_1$ and $u_2$ as:
\begin{enumerate}
  \item $u_1 \in \M(S)$ so that $u_1$ is generated by deletion $aS \Rightarrow S$;
  \item $u_2 \notin \M(S)$ so that $u_2$ is generated by insertion $S \Rightarrow bS$.
\end{enumerate}
Node $u_1$ is a new node that appears in $\CDAWG(S)$.
Thus, it follows from Lemma~\ref{lem:node_created_del}
that node $au_1$ exists in $\CDAWG(aS)$.
Since $u_2$ is not a node in $\CDAWG(S)$,
it follows from Lemma~\ref{lem:node_created} that
$u_2 = bx$ and $x$ is a node in $\CDAWG(S)$.
Based on this observation, we will show that there is an injective mapping from the new nodes in $\CDAWG(bS) = \CDAWG(T')$ to the existing nodes in $\CDAWG(aS) = \CDAWG(T)$.
In doing so, we must show that the two non-injective situations do not occur:
\begin{enumerate}
  \item[(i)] a new node $bx$ is generated by insertion $S \Rightarrow bS$, where $x$ is generated by deletion $aS \Rightarrow S$ and $x$ remains as a node in $\CDAWG(bS)$;
  \item[(ii)] a new node $bax$ generated by insertion $S \Rightarrow bS$, where $x$ is generated by deletion $aS \Rightarrow S$ and $x$ remains as a node in $\CDAWG(bS)$.
\end{enumerate}
Suppose (for a contradiction) that Case (i) happens.
Then, a new node $x$ is generated from an existing node $ax$, and $bx$ is generated from $x$.
Therefore, two new nodes could be generated from existed node $ax\in \M(aS)$.
However, the next lemma shows that this situation (Case (i)) does not occur unless $x = S$:

\vspace*{1pc}
\begin{lemma} \label{lem:injective_ax}
If $x \neq S$, $x\notin \M(aS)$, $x\in \M(S)$, and $x\in \M(bS)$, then $bx \notin \M(bS)$.
\end{lemma}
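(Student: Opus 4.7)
The plan is to trace occurrences of $bx$ in $bS$ and show $bx$ is not right-maximal. First I invoke Lemma~\ref{lem:node_created_del} with the hypotheses $x \notin \M(aS)$ and $x \in \M(S)$ to conclude $x \in \Prefix(S)$ and $ax \in \M(aS)$. Re-examining the proof of that lemma, $x$ must fail to be left-maximal in $aS$, so there is a non-empty string $\alpha$ with $\alpha x = \Long(\EqcL{x}_{aS})$. The prefix occurrence of $x$ in $aS$ (at position $2$) forces $|\alpha| \leq 1$, hence $\alpha = a$. Consequently every occurrence of $x$ in $aS$ is immediately preceded by $a$; equivalently, for every occurrence of $x$ in $S$ at a position $j \geq 2$, we have $S[j-1] = a$.

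Next I enumerate occurrences of $bx$ in $bS$. Position $1$ of $bS$ is always an occurrence, since $bS[1]=b$ and $bS[2..|x|+1] = S[1..|x|] = x$ by $x \in \Prefix(S)$. For any would-be occurrence at position $i \geq 2$, the conditions are $bS[i] = b$ and $x$ occurring at position $i+1$ of $bS$, i.e.\ $x$ occurring at position $i$ of $S$. For $i \geq 2$ we have $bS[i] = S[i-1]$, so $S[i-1]=b$ is required. But the occurrence of $x$ at position $i \geq 2$ in $S$ is preceded by $S[i-1]=a$ by the previous paragraph, and $a \neq b$ gives a contradiction. Therefore $bx$ occurs in $bS$ only at position $1$.

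Since $x \neq S$ by assumption, $|bx| = |x|+1 < |S|+1 = |bS|$, so $bx \neq bS$ and $bx$ is not a suffix of $bS$. Being a single prefix occurrence that is not a suffix, $bx$ admits exactly one right-extension $bS[|bx|+1]$ in $bS$ and is therefore not right-maximal in $bS$. Hence $bx \notin \M(bS)$, as claimed.

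\textbf{Main obstacle.} The only subtle point is pinning down that the unique left-extension of $x$ in $aS$ has length exactly one (so it is the character $a$ itself). This is not stated explicitly in Lemma~\ref{lem:node_created_del}, and must be read out of its proof together with the fact $x \in \Prefix(S)$. Once this is secured, the rest is a bookkeeping argument about how occurrence positions shift under prepending a character and how $a \neq b$ prevents any non-prefix occurrence of $bx$ in $bS$.
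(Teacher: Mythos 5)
Your proof is correct and follows essentially the same route as the paper's: both use Lemma~\ref{lem:node_created_del} to place $x$ as a prefix of $S$, deduce that $ax = \Long(\EqcL{x}_{aS})$ (so every occurrence of $x$ in $S$ at a non-initial position is preceded by $a \neq b$), and conclude that $bx$ occurs exactly once in $bS$ as a proper prefix and hence is not right-maximal. Your explicit argument that $|\alpha|\leq 1$ from the prefix occurrence of $x$ is just a spelled-out version of the paper's one-line observation.
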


\begin{proof}
  Since $x \notin M(aS)$ and $x \in M(S)$, $x\in \Prefix(S)$ by Lemma~\ref{lem:node_created_del}.
  Since $x\in \M(S)$ and $ax \in \Prefix(aS)$,
  $ax \EqrL_{aS} x$ and $ax = \Long(\EqcL{x}_{aS})$.
  This means that $bx$ occurs exactly once in $bS$ as a proper prefix.
  Thus, $bx \notin \RightM(bS)$ which leads to $bx\notin \M(bS)$.
\end{proof}
As for Lemma~\ref{lem:injective_ax}, the situation (Case (i)) can occur if $x = S$. 
However, if $x = S$, then $S\in \M(bS)$ which implies that $S$ occurs in $bS$ as prefix $bS[1..(n-1)]$.
Thus, $S = b^n$, $T = aS = ab^n$ and $T' = bS = b^{n+1}$.
It is clear that $\size(aS) = \size(bS) = n+1$.
Therefore the size of the CDAWG does not change when $x = S$.

Now we turn our attention to Case (ii) and assume (for a contradiction) that it happens.
Then, two new nodes $bax$ and $x$ could be generated from a single existing node $ax$.
According to the following lemma, however, this situation cannot occur:

\vspace*{1pc}
\begin{lemma} \label{lem:injective_bax}
  If $ax\in \M(aS)$, $x \notin \M(aS)$, $bax\notin \M(aS)$, $x\in \M(S)$, and $bax\notin \M(S)$, 
  then $bax\notin\M(bS)$.
\end{lemma}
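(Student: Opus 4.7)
The plan is to argue by contradiction, assuming $bax\in\M(bS)$. Applying Lemma~\ref{lem:node_created_del} to the deletion $aS\Rightarrow S$, the hypotheses $x\notin\M(aS)$ and $x\in\M(S)$ yield $x\in\Prefix(S)$. The crux is then to rule out $ax\in\Prefix(S)$: if $ax$ and $x$ were both prefixes of $S$, overlaying the two gives $S[i]=S[i+1]$ for $i=1,\dots,|x|$, forcing $S$ to begin with $a^{|x|+1}$ and $x=a^{|x|}$. Letting the maximal leading $a$-run of $S$ have length $m\ge|x|+1$, the string $x$ is then a prefix of $aS$ (hence left-maximal), and either $aS$ is entirely unary (so $x$ is also a suffix of $aS$) or the occurrence of $x$ ending at the last position of the leading $a$-run is followed by the first non-$a$ character while the prefix occurrence is followed by $a$; either way, $x$ is right-maximal in $aS$, giving $x\in\M(aS)$ and contradicting the hypothesis. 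Hence $ax\notin\Prefix(S)$, and therefore $bax\notin\Prefix(bS)$ (using $bax\in\Prefix(bS)$ iff $ax\in\Prefix(S)$).

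Next I would exploit that $bax$ begins with $b\neq a=aS[1]$: every occurrence of $bax$ in $aS$ lies at a position $\ge 2$, and the same holds in $bS$ because $bax\notin\Prefix(bS)$. Thus the occurrences of $bax$ in $aS$, $S$, and $bS$ are in bijection via a shift of $+1$, and in particular the sets of characters following $bax$ and the suffix conditions coincide across the three strings. Right-maximality of $bax$ is therefore equivalent in $aS$, $S$, and $bS$.

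To finish, I would split on $bax\notin\M(S)$. If $bax$ is not right-maximal in $S$ (this subsumes the case $bax\notin\Substr(S)$), then $bax$ is not right-maximal in $bS$ either, contradicting $bax\in\M(bS)$. Otherwise $bax$ is right-maximal in $S$ but not left-maximal in $S$, so $bax\notin\Prefix(S)$ and the set of characters preceding $bax$ in $S$ has size at most one. Because $bax$ has no occurrence at position~$1$ of $S$, the shift into $bS$ does not introduce a new preceding character $b=bS[1]$, so the preceding-character sets of $bax$ in $bS$ and in $S$ coincide; together with $bax\notin\Prefix(bS)$, this shows that $bax$ is not left-maximal in $bS$, the final contradiction. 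The hard part will be the periodicity step ruling out $ax\in\Prefix(S)$, which crucially uses $x\notin\M(aS)$ via a short case split on the leading $a$-run of $S$; once it is in place, the remainder of the argument is a bookkeeping exercise translating position sets of $bax$ across $aS$, $S$, and $bS$.
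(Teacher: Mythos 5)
Your proof is correct, and it shares the paper's central combinatorial idea -- that $x\in\Prefix(S)$ (from Lemma~\ref{lem:node_created_del}) and $ax\in\Prefix(S)$ together force $x=a^{|x|}$, whence $x\in\M(aS)$, contradicting the hypothesis $x\notin\M(aS)$ -- but you arrange the surrounding logic differently. The paper runs the argument forward: assuming $bax\in\M(bS)$, it invokes Lemma~\ref{lem:node_created} (a new node of $\CDAWG(bS)$ must occur as a prefix of $bS$) to conclude $ax\in\Prefix(S)$, and then transfers maximality from the hypothesis $ax\in\M(aS)$ down to $x=a^{|x|}$, finishing in a few lines. You instead first rule out $ax\in\Prefix(S)$ by deriving $x\in\M(aS)$ directly from the leading $a$-run of $S$ (so you never actually use the hypothesis $ax\in\M(aS)$, making your statement marginally stronger), and then you must re-establish by hand, via the position-shift bijection between occurrences of $bax$ in $S$ and in $bS$, that a string with $bax\notin\M(S)$ and $bax\notin\Prefix(bS)$ cannot lie in $\M(bS)$ -- which is precisely the contrapositive instance of Lemma~\ref{lem:node_created} that the paper cites instead of reproving. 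The bookkeeping in your second half is sound (note it needs $b\neq T[1]=a$ only for the dispensable remark about occurrences in $aS$; the $S$-versus-$bS$ comparison rests solely on $bax\notin\Prefix(bS)$), so the net effect is a longer but self-contained proof where the paper's is shorter by leaning on its earlier lemmas.
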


\begin{proof}
  Assume for a contradiction that $bax\in\M(bS)$.
  Since $x \notin M(aS)$ and $x \in M(S)$, $x\in\Prefix(S)$ by Lemma~\ref{lem:node_created_del}. 
  Also, since $bax \notin M(S)$ and $bax \in M(bS)$, $ax\in \Prefix(S)$ by Lemma~\ref{lem:node_created}.
  This means that $x\in \Prefix(ax)$ and $x=a^{|x|}$.
  Since $ax=a^{|x|+1}$ is a maximal substring of $aS$, $x$ is also a maximal substring of $aS$.
  Thus $x\in \M(aS)$, however, this contradicts the precondition that $x\notin \M(aS)$.
  Thus $bax\notin\M(bS)$.
\end{proof}

As a result, there is an injective mapping from the new nodes $u_1$ (resp. $u_2=bx$) in $\CDAWG(bS)$
to the existing nodes $au_1$ (resp. $x$) in $\CDAWG(aS)$ by Lemmas~\ref{lem:node_created}, \ref{lem:node_created_del}, \ref{lem:injective_ax}, and~\ref{lem:injective_bax}.
It also follows from these lemmas that the out-degree of each new node
in $\CDAWG(bS)$ does not exceed the maximum out-degree of $\CDAWG(aS)$.
Finally, we consider the source $\varepsilon$. By Lemmas~\ref{lemma:binary_a_source}, \ref{lemma:ternary_a_source}, and~\ref{lem:node_created_del}, if $b\in \M(bS)$, $b \notin \M(aS)$, and $\size(aS)\geq 3$, then $\D_{bS}(b) \leq \D_{aS}(\varepsilon)$.
Thus we have:

\vspace*{1pc}
\begin{lemma} \label{lem:new_nodes_sub}
  For any string $T$ with $\size(T) \geq 3$, $\FSub(T) \leq \size(T)-1$.
\end{lemma}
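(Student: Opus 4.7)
The plan is to finish the book-keeping set up in the preceding paragraphs. Combining Lemmas~\ref{lem:node_created}, \ref{lem:node_created_del}, \ref{lem:injective_ax}, and~\ref{lem:injective_bax} we have an injection $\phi\colon \M(bS)\setminus \M(aS)\to \M(aS)$ with $\phi(u_1)=au_1$ for type-1 new nodes and $\phi(bx)=x$ for type-2 new nodes, together with the pointwise out-degree bound $\D_{bS}(u)\le \D_{aS}(\phi(u))$ inherited from those same lemmas. Summing over all new nodes yields
\[
\FSub(T) \;\le\; \sum_{v\in \phi(\M(bS)\setminus \M(aS))} \D_{aS}(v) \;\le\; \size(aS) \;=\; \size(T),
\]
which is the weak bound; one edge of credit remains to be extracted.

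To tighten to $\size(T)-1$, I would look for the credit at the source $\varepsilon \in \M(aS)$. Since $au_1 \ne \varepsilon$ for any type-1 node, $\varepsilon$ can belong to $\phi(\M(bS)\setminus \M(aS))$ only as $\phi(b)=\varepsilon$, which in turn requires $b$ itself to be a type-2 new node, i.e.\ $b\in \M(bS)\setminus \M(S)$ and $b\notin \M(aS)$. If this does not occur, then $\varepsilon$ is absent from the image of $\phi$, the sum loses $\D_{aS}(\varepsilon)=\sigma\ge 2$ entirely, and we obtain $\FSub(T)\le \size(T)-2$. If it does occur, I invoke the source argument stated just before the lemma: Lemmas~\ref{lemma:binary_a_source} and~\ref{lemma:ternary_a_source} applied to the insertion step $S\to bS$ (legal thanks to the type-2 conditions $b\notin \M(S)$ and $b\in \M(bS)$) give $\D_{bS}(b) < \D_S(\varepsilon) \le \D_{aS}(\varepsilon)$, so the image sum loses at least one, again yielding $\FSub(T)\le \size(T)-1$. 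The hypothesis $\size(aS)\ge 3$ is what rules out the degenerate two-node CDAWG that would otherwise break the strict inequality.

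The delicate step is the binary-alphabet sub-case $\sigma=2$, because Lemma~\ref{lemma:binary_a_source} needs an auxiliary witness internal type-2 new node to force strict inequality. I would peel off the sub-case in which $b$ is the only new node of $\CDAWG(bS)$ by the trivial estimate $\FSub(T)=\D_{bS}(b)\le \sigma=2\le \size(T)-1$, using $\size(T)\ge 3$; in every remaining sub-case the witness precondition of Lemma~\ref{lemma:binary_a_source} is met, and the strict source inequality carries through. A secondary wrinkle is the potential extra out-edge on a type-1 node flagged by Lemma~\ref{lemma:new_branches}; I expect to resolve it by checking that such an extra edge forces $b\in \M(S)$, hence is incompatible with the event ``$b$ is type-2 new'', so the two sources of slack never need to be paid at once. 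Combining all sub-cases gives the claimed bound $\FSub(T)\le \size(T)-1$.
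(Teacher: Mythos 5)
Your proposal matches the paper's argument: the injection from new nodes of $\CDAWG(bS)$ into $\M(aS)$ built from Lemmas~\ref{lem:node_created}, \ref{lem:node_created_del}, \ref{lem:injective_ax}, and~\ref{lem:injective_bax}, with the final unit of credit extracted at the source via Lemmas~\ref{lemma:binary_a_source} and~\ref{lemma:ternary_a_source} applied to the insertion step $S \Rightarrow bS$. If anything you are more explicit than the paper's own terse wrap-up, in particular about the $\sigma=2$ witness precondition and about a type-1 node absorbing the single extra branch from Lemma~\ref{lemma:new_branches} (your observation that this forces the node to be $b^k$ and hence $b\in\M(S)$ is correct and does reconcile the two sources of slack), so the approach is essentially the same.
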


\subsubsection{$\GSub(T)$: number of new branches from existing nodes}
Since left-end deletions do not create new branches from existing nodes (recall Section~\ref{sec:deletions}), it is immediate from Lemma~\ref{lemma:new_branches} that:

\vspace*{1pc}
\begin{lemma} \label{lem:new_branches_sub}
  For any string $T$, $\GSub(T) \leq 1$.
\end{lemma}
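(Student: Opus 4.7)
The plan is to view the substitution $aS \Rightarrow bS$ as the composition of a left-end deletion $aS \Rightarrow S$ followed by a left-end insertion $S \Rightarrow bS$, and to track how the out-degree of any surviving node $y \in \M(aS) \cap \M(bS)$ evolves through the intermediate string $S$.

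First I would argue that the deletion step cannot create any new outgoing edge at a surviving node; this is the same observation that justified considering only $\FDel$ in Section~\ref{sec:deletions}. Every occurrence of $y$ in $S$ lifts to an occurrence in $aS$ by a one-position right shift, so every character that right-extends $y$ in $S$ already right-extends $y$ in $aS$. Hence $\D_S(y) \leq \D_{aS}(y)$ whenever $y$ is a node of both $\CDAWG(aS)$ and $\CDAWG(S)$, so no $\GDel$-type contribution appears.

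Next I would verify the intermediate inclusion $\M(aS) \cap \M(bS) \subseteq \M(S)$: any node that survives the substitution is already a node of the intermediate CDAWG. Since $a \neq b$, a string $y \in \M(aS) \cap \M(bS)$ cannot have length $n$ (that would force $aS = bS$), so $|y| \leq n-1$ and $y$ must occur entirely inside $S$. A short case analysis on whether the leading character of $y$ is $a$, $b$, or neither then transfers the left- and right-maximality conditions from $aS$ (or from $bS$) to $S$; the only subtle point is the prefix occurrence of $y$, which gets a single new or lost left-context character when the leading symbol is changed. With $y \in \M(S)$ in hand, Lemma~\ref{lemma:new_branches} applies to the insertion $S \Rightarrow bS$ and gives $\D_{bS}(y) \leq \D_S(y) + 1$, with the strict inequality attained by at most one such $y$.

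Chaining the two inequalities yields $\D_{bS}(y) - \D_{aS}(y) \leq 1$ for every $y \in \M(aS) \cap \M(bS)$, and the uniqueness clause of Lemma~\ref{lemma:new_branches} caps the number of surviving nodes that realise this gain at one, which is exactly $\GSub(T) \leq 1$. The hard part will be the middle transfer inclusion: one has to track carefully what happens at the prefix occurrence of $y$ when a single leading character is swapped, and check that the left-maximality condition (two distinct left extensions, or being a prefix) survives the switch in every sub-case. Once that routine verification is done, the two-step reduction and Lemma~\ref{lemma:new_branches} assemble the conclusion immediately.
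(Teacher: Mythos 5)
Your proposal is correct and follows essentially the same route as the paper: the paper's proof is a one-line remark that the deletion step $aS \Rightarrow S$ creates no new branches at existing nodes and that Lemma~\ref{lemma:new_branches} then bounds the insertion step $S \Rightarrow bS$ by at most one new branch. The intermediate inclusion $\M(aS)\cap\M(bS)\subseteq\M(S)$ that you flag as the delicate point is indeed what is needed to chain the two steps; it holds by the case analysis you sketch (since $a\neq b$, a surviving $y\neq\varepsilon$ fails to be a prefix of at least one of $aS$, $bS$, and its maximality transfers to $S$ from that side), and the paper simply leaves it implicit.
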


\subsubsection{Putting all together}
Our main result of this section follows from Lemmas~\ref{lem:new_nodes_sub} and~\ref{lem:new_branches_sub}:

\vspace*{1pc}
\begin{theorem}
For any $n \geq 4$ and $\size \geq 3$, $\ASSub(\size,n) \leq \size$.
\end{theorem}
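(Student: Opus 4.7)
The plan is to assemble the theorem directly out of the two structural lemmas already in hand, so what I am really doing is a bookkeeping step rather than a fresh combinatorial argument. The starting point is the same decomposition used earlier for insertions: every edge of $\CDAWG(T')$ that does not appear in $\CDAWG(T)$ either emanates from a node that is new in $\CDAWG(T')$, in which case it is counted by $\FSub(T)$, or else emanates from a node that already exists in $\CDAWG(T)$ but points to a right-context that did not exist before, in which case it is counted by $\GSub(T)$. Hence I would first write down the inequality
\[
\size(T') - \size(T) \leq \FSub(T) + \GSub(T).
\]

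Next I would invoke Lemma~\ref{lem:new_nodes_sub}, which under the assumption $\size(T) \geq 3$ gives $\FSub(T) \leq \size(T) - 1$, and Lemma~\ref{lem:new_branches_sub}, which gives $\GSub(T) \leq 1$ unconditionally (it follows from Lemma~\ref{lemma:new_branches} because a left-end deletion alone creates no new branches at surviving nodes, so the single new branch that may appear under substitution must come from the insertion half of the composite operation). Summing the two yields
\[
\size(T') - \size(T) \leq (\size(T) - 1) + 1 = \size(T),
\]
which is exactly $\ASSub(\size, n) \leq \size$. The hypothesis $n \geq 4$ is invoked only to ensure that the substitution $aS \Rightarrow bS$ with $b \neq a$ is well defined on a string long enough to realize a CDAWG with $\size(T) \geq 3$; for smaller $n$ the statement is either vacuous or verifiable by direct inspection of tiny CDAWGs.

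The hardest step is not in the wrap-up itself but in making sure the two input lemmas really do apply together without double-counting. In particular, I would want to re-examine that the injective mapping constructed in Lemma~\ref{lem:new_nodes_sub}, which sends new nodes of $\CDAWG(bS)$ of type $u_1$ to $au_1 \in \M(aS)$ and of type $u_2 = bx$ to $x \in \M(aS)$, is compatible with the existence of the unique extra branch counted by $\GSub(T)$: that branch lives at an already existing node and does not participate in the node-map, so the additions are disjoint contributions to $\size(T') - \size(T)$. Checking this compatibility, together with the edge case $x = S$ in Lemma~\ref{lem:injective_ax} where $T = ab^{n-1}$ and $T' = b^n$ both have $\CDAWG$ size $n$, is the only real content of the proof; everything else is the arithmetic $1 + (\size(T) - 1) = \size(T)$.
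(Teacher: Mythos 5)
Your proposal matches the paper's argument exactly: the theorem is obtained by combining Lemma~\ref{lem:new_nodes_sub} ($\FSub(T) \leq \size(T)-1$ for $\size(T)\geq 3$) and Lemma~\ref{lem:new_branches_sub} ($\GSub(T) \leq 1$) through the decomposition $\size(T')-\size(T) \leq \FSub(T)+\GSub(T)$. Your additional remarks on the disjointness of the two contributions and the edge case $x=S$ are consistent with the discussion the paper carries out before stating those lemmas, so nothing is missing.
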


\subsection{Lower bound for $\ASSub(\size, n)$ on CDAWGs}
\label{sec:substitute_lowerbound}

The next lower bound for $\ASSub(\size, n)$ holds.

\vspace*{1pc}
\begin{theorem} \label{theo:lowerbound_substitution}
  There exists a family of strings $T$ such that
  $\size(T')-\size(T) = \size(T)-3$,  where $T' = bT[2..n]$ with $b \in \Sigma \setminus \{T[1]\}$.
  Therefore $\ASSub(\size, n) \geq \size-3$.
\end{theorem}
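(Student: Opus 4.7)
The plan is to construct an explicit infinite family of strings that realizes the claimed bound. For each $m\ge 1$ I would take $T_m = c(ab)^{m+1}c(ab)^m$ over $\Sigma=\{a,b,c\}$, so $T_m[1]=c$, and substitute $c$ by $b$ to obtain $T'_m = b(ab)^{m+1}c(ab)^m$. The shape is analogous to the insertion and deletion lower-bound instances: one $(ab)$-block is one factor longer than the other, and the leading $c$ in $T_m$ acts as a ``stopper'' preventing any $b$-prefixed substring from being maximal until the substitution happens.

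A direct enumeration on $T_m$ should give
\[
\M(T_m) \;=\; \{\varepsilon\}\cup\{(ab)^k : 1\le k\le m\}\cup\{c(ab)^m, T_m\},
\]
so $|\M(T_m)|=m+3$. The points to check are that each $(ab)^k$ is preceded by both $b$ and $c$ and followed by both $a$ and $c$; that $c(ab)^m$ occurs exactly twice, once as a prefix and once as a suffix, hence is both left- and right-maximal; and that the shorter prefixes $c(ab)^k$ for $k<m$ are followed only by $a$ and so are not right-maximal. Summing out-degrees ($3$ at the source, $2$ for each $(ab)^k$, $1$ for $c(ab)^m$, $0$ at the sink) yields $\size(T_m)=2m+4$. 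A parallel enumeration on $T'_m$ should give
\[
\M(T'_m) \;=\; \{\varepsilon\}\cup\{(ab)^k : 1\le k\le m\}\cup\{b(ab)^k : 0\le k\le m\}\cup\{T'_m\},
\]
of size $2m+3$; each $b(ab)^k$ has two occurrences inside the extended first block $b(ab)^{m+1}$, with left-contexts $\{\varepsilon,a\}$ and right-contexts $\{a,c\}$, so every non-sink node except the source has out-degree $2$, giving $\size(T'_m)=3+2m+2(m+1)=4m+5$. Subtracting, $\size(T'_m)-\size(T_m) = 2m+1 = \size(T_m)-3$, which establishes the lower bound.

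The main subtlety is the maximality analysis. The asymmetry $(ab)^{m+1}$ versus $(ab)^m$ in $T_m$ is exactly what forces $(ab)^m$ itself to be left-maximal (preceded by $b$ inside the first block and by $c$ at the boundary), so that $(ab)^m$ and $c(ab)^m$ are genuinely distinct nodes; dropping this asymmetry changes the constant. The other delicate check is that the new node $b(ab)^m$ of $T'_m$ really attains out-degree $2$, which holds because its two occurrences end at positions $2m+1$ and $2m+3$ and are followed by $a$ and $c$, respectively. Everything else is a mechanical case analysis on the periodic structure of the blocks.
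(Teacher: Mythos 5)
Your proposal is correct and takes essentially the same approach as the paper: exhibit an explicit two-block periodic family and directly enumerate the maximal substrings and their out-degrees before and after the substitution. The only difference is the witness string (you use $T=c(ab)^{m+1}c(ab)^m$ with the leading $c$ replaced by $b$, while the paper uses $T=(ab)^{m+1}c(ab)^m$ with the leading $a$ replaced by $b$); both yield exactly $\size(T')-\size(T)=\size(T)-3$, and your counts check out.
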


\begin{figure}[h!]
  \centering
  \includegraphics[keepaspectratio,scale=0.42]{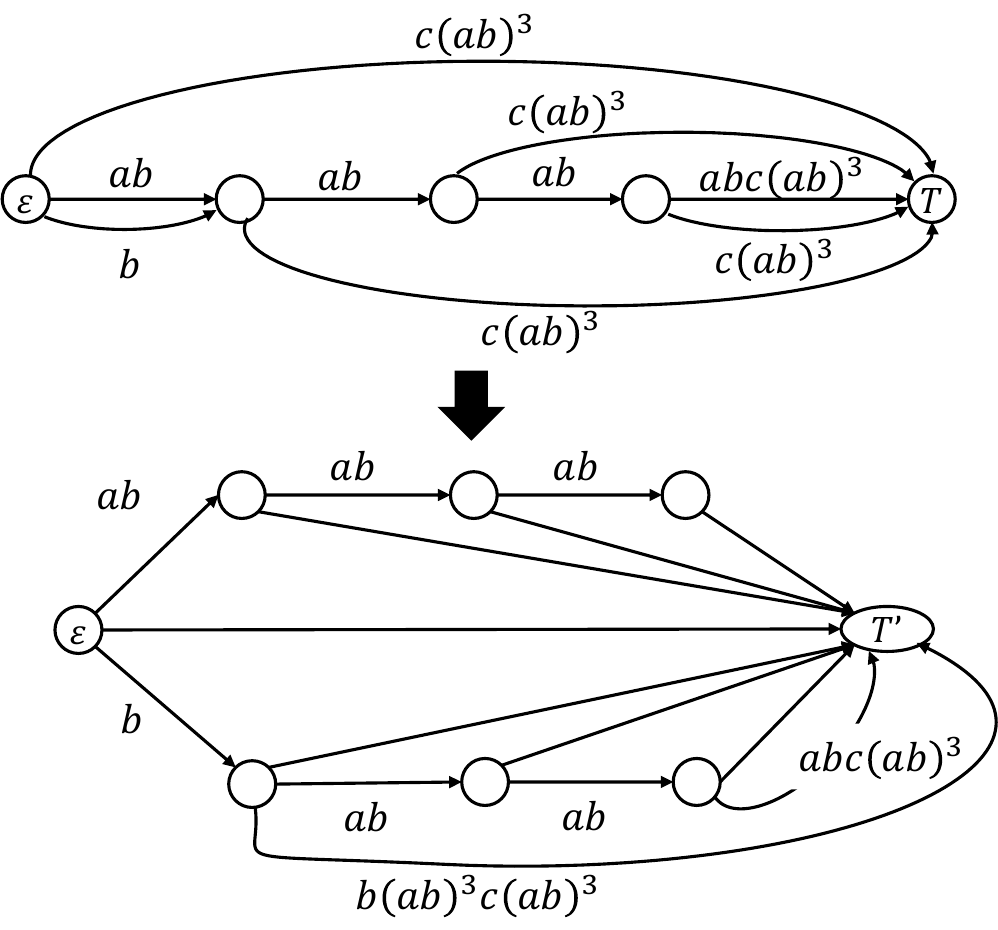}
  \caption{Illustration for the CDAWGs of strings $T=(ab)^4 c(ab)^3$ and
   $T'= bT[2..n]=bb(ab)^3 c (ab)^3$ with $m = 3$. The omitted edge labels are all $c(ab)^3$. Observe that $\size(T) = 9$, $\size(T') = 15$, and hence $\size(T')-\size(T) = 6 = \size(T)-3$ with this left-end deletion.}
  \label{fig:substitution_lowerbound}
\end{figure}

\begin{proof}
Consider string
$$T=(ab)^{m+1} c(ab)^m,$$
where $a, b, c \in \Sigma$. 
We have that
$$\M(T)=\{\varepsilon, ab, (ab)^2,...,(ab)^m, T\}.$$
Then, since $\D_T(\varepsilon) = 3$, $\D_T((ab)^i) = 2$ for every $1 \leq i \leq m$, and $\D_T(T) = 0$,
we have $\size(T) =2m+3$.

Let us now replace the first character $a = T[1]$ of $T$ by $b$ and obtain
$$T' = bb(ab)^m c(ab)^m.$$
We have that
\begin{eqnarray*}
  \M(T') & = & \{\varepsilon,ab,(ab)^2,...,(ab)^m,b,bab,b(ab)^2,...,b(ab)^{m-1},T'\} \\
  & = & \left( \M(T) \setminus \{T\} \right) \cup \{b,bab,b(ab)^2,...,b(ab)^{m-1}\} \cup \{T'\},
\end{eqnarray*}
and that $\D_{T'}(\varepsilon)=\D_{T'}(b)=3$,
$\D_{T'}((ab)^i) = \D_{T'}(b(ab)^i) = 2$ for every $1 \leq i \leq m-1$, 
$\D_{T'}(b(ab)^m) = 1$, 
and $\D_{T'}(T') = 0$
(see Figure~\ref{fig:substitution_lowerbound} for illustration).
Thus $\size(T') = 4m+3 = 2(2m+3)-3 = 2\size(T)-3$ which shows the claim
that $\ASSub(\size, n) \geq \size -3$.
\end{proof}

\section{Quadratic-time bounds for leftward online construction}

\subsection{Leftward online construction of CDAWGs}

The leftward online construction problem for the CDAWG is,
given a string $T$ of length $n$,
to maintain $\CDAWG(T[i..n])$ for decreasing $i = n, \ldots, 1$.
By extending our lower bound on the sensitivity with left-end insertions/deletions from Sections~\ref{sec:insert_lowerbound} and \ref{sec:delete_lowerbound}, a quadratic bound for this online CDAWG construction follows:

\vspace*{1pc}
\begin{theorem} \label{theo:lowerbound_leftonline}
  There exists a family of strings $T_m$ for which
  the total work for building $\CDAWG(T_m[i..n_m])$ for decreasing
  $i = n_m, \ldots, 1$ is $\Omega({n_m}^2)$, where $n_m = |T_m|$.
\end{theorem}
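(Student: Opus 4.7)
The plan is to exhibit a family $\{T_m\}$ with $n = |T_m| = \Theta(m)$ such that at $\Omega(m)$ of the prepending steps of the leftward online construction of $\CDAWG(T_m)$, the CDAWG undergoes $\Omega(m)$ edge insertions or deletions. Since any online algorithm must execute at least one primitive operation per edge actually changed in the maintained DAG, summing over these steps yields total work at least $\Omega(m)\cdot\Omega(m) = \Omega(n^2)$.

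The construction would iterate the insertion lower-bound family $U_m = (ab)^m c (ab)^{m+1}$ from Theorem~\ref{theo:lowerbound_insertion}, whose essential feature is that a single left-end prepending takes $\size$ from $2m+3$ to $4m+4$, i.e., creates $\Omega(m)$ new edges in one step. Concretely, I would take $T_m = w U_m$ where $w \in \{a,b\}^{\Theta(m)}$ is chosen so that every intermediate suffix $T_m[i..n]$ encountered during the leftward construction stays in a prescribed two-parameter family of shapes of the form $(ab)^p c (ab)^{p+1}$ or $b(ab)^p c (ab)^{p+1}$ with $p = \Theta(m)$. Each transition between such consecutive shapes is precisely an instance of the hard configuration analysed in the proof of Theorem~\ref{theo:lowerbound_insertion}, so by the same $\M$-set computation the two successive CDAWGs differ by $\Omega(m)$ edges. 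Because $w$ has length $\Theta(m)$, this gives $\Omega(m)$ expensive steps.

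With the family in place, the proof reduces to an iterated version of the one-step bookkeeping already carried out for Theorem~\ref{theo:lowerbound_insertion} and Theorem~\ref{theo:lowerbound_deletion}: determine $\M(T_m[i..n])$ for every relevant $i$, and verify that for $\Omega(m)$ consecutive values of $i$ the symmetric difference of $\M(T_m[i..n])$ and $\M(T_m[i+1..n])$ has size $\Omega(m)$, so that the two CDAWGs differ in that many edges. The main obstacle is precisely this iterated computation: the per-step analysis of Theorem~\ref{theo:lowerbound_insertion} was already delicate, and one must rule out any ``lucky'' collapse at an intermediate suffix which would produce a cheap step and break the $\Omega(m)$ per-step bound. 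Restricting to inputs whose intermediate suffixes are provably confined to the prescribed $(ab)^p c (ab)^q$ skeleton avoids this pitfall and reduces the verification to a routine induction on $p$, after which the $\Omega(n^2)$ bound on the total number of edge updates, and hence on the running time of any leftward online construction algorithm, follows immediately.
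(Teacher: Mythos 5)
Your overall strategy coincides with the paper's: construct a single string whose successive suffixes oscillate between a cheap configuration and the expensive configuration underlying Theorem~\ref{theo:lowerbound_insertion}, so that $\Omega(m)$ of the $n$ prepending steps each force $\Omega(m)$ edge updates, giving $\Omega(n^2)$ total work. The paper's concrete witness is $T_m=(ab)^{2m}cab(ab)^{2m}\$ $: its suffixes alternate between $T_{k,m}=(ab)^{m+k}cab(ab)^{2m}\$ $ with $\size(T_{k,m})=5m+k+4$ and $bT_{k,m}$ with $\size(bT_{k,m})=8m+4k+7$, so each $b$-prepend adds, and the following $a$-prepend removes, $\Omega(m)$ edges, for every $k=1,\ldots,m-1$.

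There is, however, a concrete defect in the construction as you state it. The shape family you prescribe, ``every intermediate suffix has the form $(ab)^p c(ab)^{p+1}$ or $b(ab)^p c(ab)^{p+1}$,'' is unrealizable: all suffixes $T_m[i..n]$ share the same fixed block to the right of $c$, while $p$ must grow as $i$ decreases, so the right-hand exponent cannot track $p+1$. You must decouple the two exponents and fix the right-hand block at a length dominating every left-hand block encountered during the scan (the paper uses $(ab)^{2m}$ on the right while the left block only ranges up to $(ab)^{2m}$, plus a terminal $\$ $). Once decoupled, the per-step bookkeeping is no longer literally ``an instance of'' Theorem~\ref{theo:lowerbound_insertion} (whose instance has $q=p+1$); the $\M$-sets and out-degrees must be recomputed for the fixed-right-block shape, e.g.\ $\D_{T_{k,m}}((ab)^i)=3$ for $i\leq m+k$ but $2$ for $m+k<i\leq 2m$. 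This recomputation is routine but is exactly the content of the proof, and your proposal defers it entirely; as written, the family is not exhibited and the claimed invariant on the suffix shapes is false. With the right-hand block fixed and long, your summation argument then goes through as in the paper.
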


\begin{proof}
  Consider string
  $$T_m=(ab)^{2m} cab (ab)^{2m}\$, $$
  where $a,b,c,\$\in \Sigma$.
  For $0 \leq k \leq m$,
  let $T_{k,m}$ denote a series of suffixes of $T_{m}$
  such that
  $$T_{k,m} =(ab)^{m+k}cab(ab)^{2m} \$.$$
  Notice $T_{m,m} = T_m$,
  $m = \Theta(n_m)$ with $n_m = |T_{m,m}|$, and $T_{k,m}=T_m[2(m-k)+1..n_m]$.

  Now, we consider building $\CDAWG(T_m[i..n_m])$ for decreasing $i=n_m, \ldots, 1$,
  and suppose we have already built $\CDAWG(T_{k,m})$.  
  For this string $T_{k,m}$, we have that
  $\M(T_{k,m})=\{\varepsilon, ab, (ab)^2, \ldots ,(ab)^{2m},T_{k,m}\}$.
  For any node $v$ of $\CDAWG(T_{k,m}) = (\mathsf{V}_{T_{k,m}}, \mathsf{E}_{T_{k,m}})$, let $\D_{T_{k,m}}(v)$ denote the out-degree of $v$.
  Then, we have that $\D_{T_{k,m}}(\varepsilon) = 4$,
  $\D_{T_{k,m}}((ab)^i) = 3$ for every $1 \leq i \leq m+k$, 
  $\D_{T_{k,m}}((ab)^j) = 2$ for every $m+k+1 \leq j \leq 2m$, and $\D_{T_{k,m}}(T_{k,m})=0$.
  Therefore $\size(T_{k,m}) =5m+k+4$.

\begin{figure}[h!]
  \centering
  \includegraphics[keepaspectratio,scale=0.45]{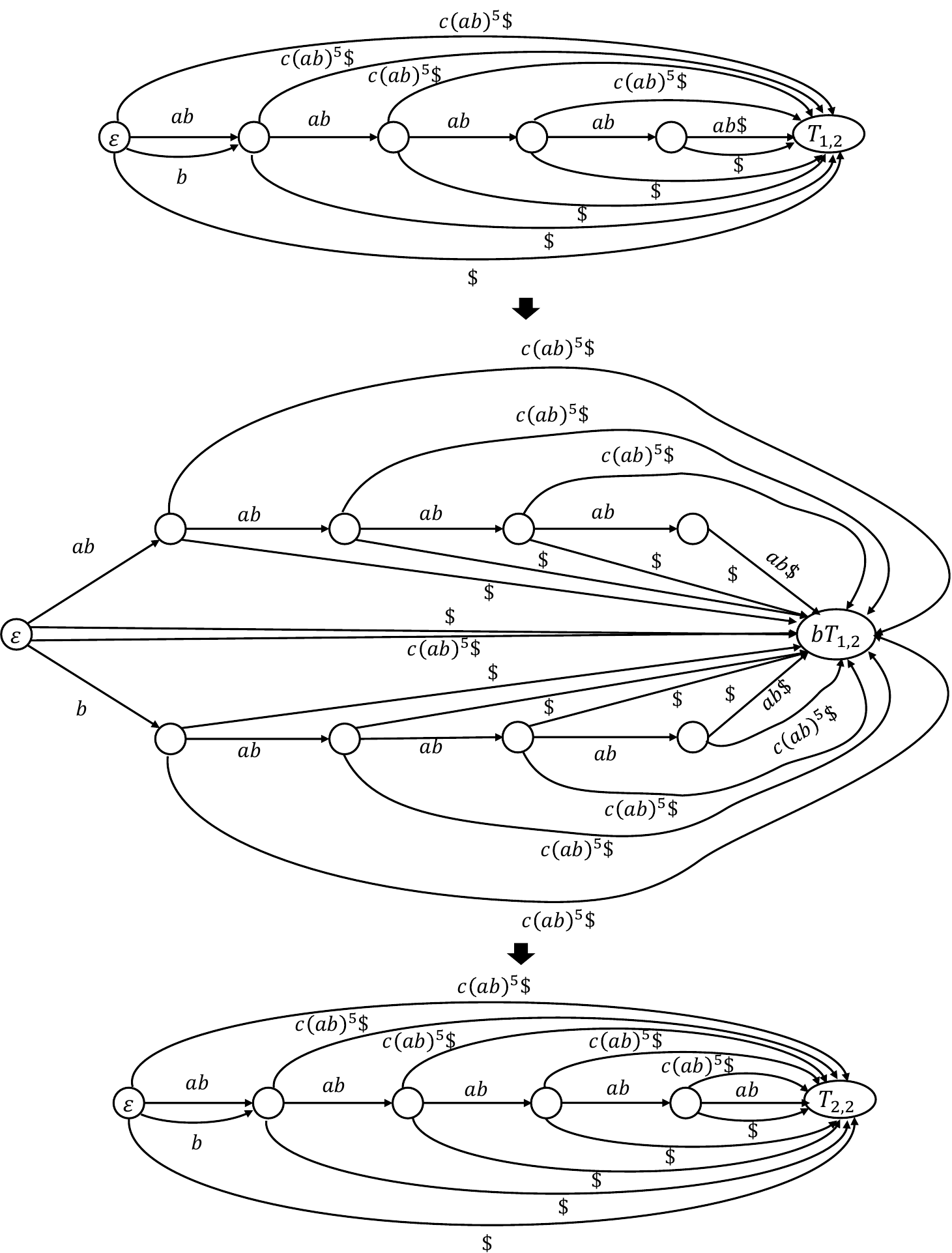}
  \caption{Illustration for the CDAWGs of strings $T_{k,m}=(ab)^3 cab(ab)^4\$$, $bT_{k,m}=b(ab)^3 cab(ab)^4\$$, 
  and $T_{k+1,m}=(ab)^4 cab (ab)^4\$$ with $k = 1, m = 2$.}
  \label{fig:leftonline_lowerbound}
\end{figure}
  
  Let us now prepend character $b$ to $T_{k,m}$ and obtain
  $$T_{k+1,m} = bT_{k,m} = b(ab)^{m+k} c(ab)^{2m}\$. $$ 
  It is clear that $bT_{k,m}=T_{m,m}[2(m-k)..n_m]$. 
  We have that
  \begin{eqnarray*}
    \M(bT_{k,m}) & = & \{\varepsilon,ab,(ab)^2,...,(ab)^{2m},b,bab,b(ab)^2,...,b(ab)^{m+k},bT_{k,m}\} \\
    & = & \left( \M(T_{k,m}) \setminus \{T_{k,m}\} \right) \cup \{b,bab,b(ab)^2,...,b(ab)^{m+k}\} \cup \{bT_{k,m}\},
  \end{eqnarray*}
  and that $\D_{bT_{k,m}}(\varepsilon)=4$,
  $\D_{bT_{k,m}}(b) = 3$,
  $\D_{bT_{k,m}}((ab)^i) = \D_{bT_{k,m}}(b(ab)^i)= 3$ for every $1 \leq i \leq m+k$,
  $\D_{bT_{k,m}}(b(ab)^j) = 2$ for every $m+k+1 \leq j\leq 2m$, 
  and $\D_{bT_{k,m}}(bT_{k,m}) = 0$. 
  Thus $\size(bT_{k,m}) = 8m+4k+7$.
  Therefore, building $\CDAWG(T_{k+1,m})$ from $\CDAWG(T_{k,m})$
  requires to \emph{add} $|\size(T_{k+1,m}) - \size(T_{k,m})| = 3m+3k+3 = \Omega(m)$ new edges (see the first step of Figure~\ref{fig:leftonline_lowerbound} for illustration).

  Let us move on to the next step, where we prepend character $a$ to $bT_{k,m}$ and obtain 
  $T_{k+1,m} = abT_{k,m} = ab(ab)^{m+k} c(ab)^{2m}\$ $. 
  Note that $abT_{k,m}=T_{k+1,m} = T_m[2(m-k)-1..n_m]$,
  and $\M(T_{k+1,m})=\{\varepsilon, ab, (ab)^2,...,(ab)^{2m},T_{k+1,m}\}$.
  We also have $\D_{T_{k+1,m}}(\varepsilon) = 4$, $\D_{T_{k+1,m}}((ab)^i) = 3$ for every $1 \leq i \leq m+k+1$, 
  $\D_{T_{k+1,m}}((ab)^j) = 2$ for every $m+k+2 \leq j \leq 2m$, and $\D_{T_{k+1,m}}(T_{k+1,m})=0$.
  This leads to $\size(T_{k+1,m}) =5m+k+5$.
  Therefore, building $\CDAWG(T_{k+1,m})$ from $\CDAWG({bT_{k,m}})$
  requires to \emph{remove} $|\size(T_{k+1,m}) - \size(bT_{k,m})| = 3m+3k+2 = \Omega(m)$ existing edges (see the second step of Figure~\ref{fig:leftonline_lowerbound} for illustration).

  This process of adding and removing $\Omega(m)$ edges in every two steps
  repeats when we update $\CDAWG(T_{k,m})$ to $\CDAWG(T_{k+1,m})$
  for every increasing $k = 1,\ldots, m-1$.
  Since $m = \Theta(n_m)$, 
  the total work for building $\CDAWG(T_m[i..n_m])$ for decreasing
  $i = n_m, \ldots, 1$ is $\Omega(m^2) = \Omega({n_m}^2)$.
\end{proof}

\begin{remark}
The linear-time algorithm of~\cite{Inenaga2005}
for \emph{rightward} online CDAWG construction maintains a slightly
modified version of the CDAWG, which becomes isomorphic to our CDAWG
when a terminal symbol $\$$ is appended to the string.
Still, our lower bound instance from Theorem~\ref{theo:lowerbound_leftonline} shows that $\$$ does not help improve the time complexity of \emph{leftward} online CDAWG construction.
\end{remark}

\subsection{Leftward online batched construction of CDAWGs}
\label{sec:batched_online}

The CDAWGs for the lower bound instance of Theorem~\ref{theo:lowerbound_leftonline}
have a periodic structure
such that
\begin{itemize}
\item $\CDAWG(T_{2j,m})$ for all even $k = 2j$ have common structures, and
\item $\CDAWG(T_{2j+1,m})$ for all odd $k = 2j+1$ have common structures.
\end{itemize}
This poses the following question: What if we allow a batched update of the CDAWG in its leftward online construction, where a string of fixed length $b > 1$ is prepended to the current string?
Namely, can we do any better when we are only to update $\CDAWG(T[1+kb..n])$ to $\CDAWG(T[1+(k-1)b..n])$ for the input string $T$ with decreasing $k = n/b, \ldots, 1$?
The next lemma however answers this question negatively:

\vspace*{1pc}
\begin{theorem} \label{theo:batched_online}
  For any fixed positive integer $b$, which devides $n$, 
  there exits a family of strings $T$ of length $n$ for which
  the total work for updating $\CDAWG(T[1+kb..n])$ to 
  $\CDAWG(T[1+(k-1)b..n])$ for decreasing $k = n/b, \ldots, 1$ $i = n, \ldots, 1$ is $\Omega(n^2)$.
\end{theorem}

\begin{proof}
  Let $t$ be any integer such that $\gcd(b,t) = 1$ and $t \geq 2$,
  and let $\Sigma = \{ \sigma_1,\sigma_2,...,\sigma_t,\#,\$\}$.
  Let
  \begin{eqnarray*}
    S & = & \sigma_1\sigma_2 \cdots \sigma_t, \\
    S'_i & = & \sigma_{t-i+1}\sigma_{t-i+2} \cdots \sigma_t,
  \end{eqnarray*}
  where $1 \leq i \leq t-1$.
  Namely, $S'_i$ is the suffix of $S$ of length $i$.

  Consider strings
  \begin{eqnarray*}
  T & = &  S^{2m} \# S^{2m+1}\$ = (\sigma_1\sigma_2 \cdots \sigma_t)^{2m} \# (\sigma_1\sigma_2 \cdots \sigma_t)^{2m+1}\$, \\
  T_k & = &  S^{m+k} \# S^{2m+1}\$ = (\sigma_1\sigma_2 \cdots \sigma_t)^{m+k} \#  (\sigma_1\sigma_2 \cdots \sigma_t)^{2m+1}\$,
  \end{eqnarray*}
  where $1 \leq k \leq m$. 
  Namely, $T_k = T[(m-k)t+1..n]$.
  Let $r = \lfloor b/t \rfloor$ and $d = b \bmod t$.
  
  In the string $T_k$, we have that
  \[
  \M(T_k)=\{\varepsilon, S, S^2, \ldots ,S^{2m},T_k\}.
  \]
  We also have $\D_{T_k}(\varepsilon) = t+2$, $\D_{T_k}(S^j) = 3$ for every $1 \leq j \leq m+k+1$, 
  $\D_{T_k}(S^j) = 2$ for every $m+k+2 \leq j \leq 2m$, and $\D_{T_k}(T_k)=0$.
  This leads to $\size(T_k) =5m+k+t+3$.

  On the other hand, in the string $S'_i T_k$ with any $1 \leq i \leq t-1$, we have that
  \[
  \M(S'_i T_k) = \{\varepsilon, S, S^2, \ldots, S^{2m}, S'_i, S'_i S, S'_i S^2, \ldots, S'_i S^{m+k}, S'_i T_k\}.
  \]
  and that $\D_{S'_i T_k}(\varepsilon)=t+2$,
  $\D_{S'_i T_k}(S'_i) = 3$,
  $\D_{S'_i T_k}(S^j) = \D_{S'_i T_k}(S'_i S^j)= 3$ for every $1 \leq j \leq m+k$,
  $\D_{S'_i T_k}(S^j) = 2$ for every $m+k+1 \leq j\leq 2m$, 
  and $\D_{S'_i T_k}(S'_i T_k) = 0$. 
  Thus $\size(S'_i T_k) = 8m+4k+t+5$.

  Now, we consider updating $\CDAWG(T[1+kb..n])$ to $\CDAWG(T[1+(k-1)b..n])$ for each fixed $k = n/b, \ldots, 1$.
  We discuss the two following particular cases from the update process:
  \begin{enumerate}
    \item Updating $\CDAWG(T_k)$ to $\CDAWG(S'_b T_{k+r})$;
    \item Updating $\CDAWG(S'_{t-d}T_k)$ to $\CDAWG(T_{k+r+1})$;
  \end{enumerate}

  In Case 1, we have $\size(T_k) = 5m+k+t+3$ and $\size(S'_b T_{k+r}) = 8m+4(k+r)+t+2$ so
  $|\size(S'_b T_{k+r}) - \size(T_k)| = 3m + k + 4r - 1 = \Omega(m)$.
  In Case 2, likewise, we have $|\size(T_{k+r+1}) - \size(S'_{t-d}T_k)| = 
  |(5m + (k+r+1) + t + 3) - (8m + 4k + t + 5)| = 3m + 3k -r+1 = \Omega(m)$.
  Therefore, $\Omega(m)$ edges are added in Case 1 and then $\Omega(m)$ edges are deleted in Case 2.

  Since $\gcd(b,t) = 1$ (which implies that $b\bmod t, 2b\bmod t, \ldots, (t-1)b\bmod t$ are all different), 
  Case 1 and Case 2 occur for every $t$ times of $k$ if $1 + kb < mt$. 
  Therefore, Case 1 and Case 2 occur at least $\lfloor m/b \rfloor = \Omega(m)$ times. 

  Since $m = \Theta(n)$, 
  the total work for updating $\CDAWG(T[1+kb..n])$ to $\CDAWG(T[1+(k-1)b..n])$ for decreasing $k = n/b, \ldots, 1$ $i = n, \ldots, 1$
   is $\Omega(m^2) = \Omega(n^2)$.
\end{proof}

\begin{example} \label{ex:batched_online}
  Set $m = 8$, $t = 5$, and $b = 4$,
  and let us consider updating string  
  $(abcde)^{8} \# (abcde)^{17}\$$
  to
  $T = (abcde)^{16} \# (abcde)^{17} \$$.
  Below we pick up the important steps during the whole process of
  the updates by prepending 4 characters at each time (see also Figure~\ref{fig:batched_online} for illustration):
  \begin{itemize}
    \item The process of updating $(abcde)^{8} \# (abcde)^{17} \$$ to $bcde(abcde)^{8} \# (abcde)^{17} \$ $ with $k = 10$ adds $\Omega(m)$ edges to the CDAWG (Case 1).
    \item The process of updating $e(abcde)^{11} \# (abcde)^{17} \$$ to $(abcde)^{12} \# (abcde)^{17}\$$ with $k = 6$ removes $\Omega(m)$ edges from the CDAWG (Case 2).
   \item The process of updating $(abcde)^{12} \# (abcde)^{17} \$$ to $bcde (abcde)^{12} \# (abcde)^{17}\$$ with $k = 5$ adds $\Omega(m)$ edges to the CDWAG (Case 1).
   \item The process of updating $e(abcde)^{15} \# (abcde)^{17} \$$ to $(abcde)^{16} \# (abcde)^{17}\$$ with $k = 1$ removes $\Omega(m)$ edges from the CDAWG (Case 2).
  \end{itemize}
In the above instance, the number of steps where $\Omega(m)$ edges are added or removed is $2$~$(= \lfloor m/b \rfloor)$ for every consecutive $5$~$(= t)$ series of $k$, namely, at $k = 10$ and $6$ for $k = 10, \ldots, 6$, and at $k = 5$ and $1$ for $k = 5, \ldots, 1$.
\end{example}

\begin{figure}[h!]
  \centering
  \includegraphics[keepaspectratio,width=1.0\linewidth]{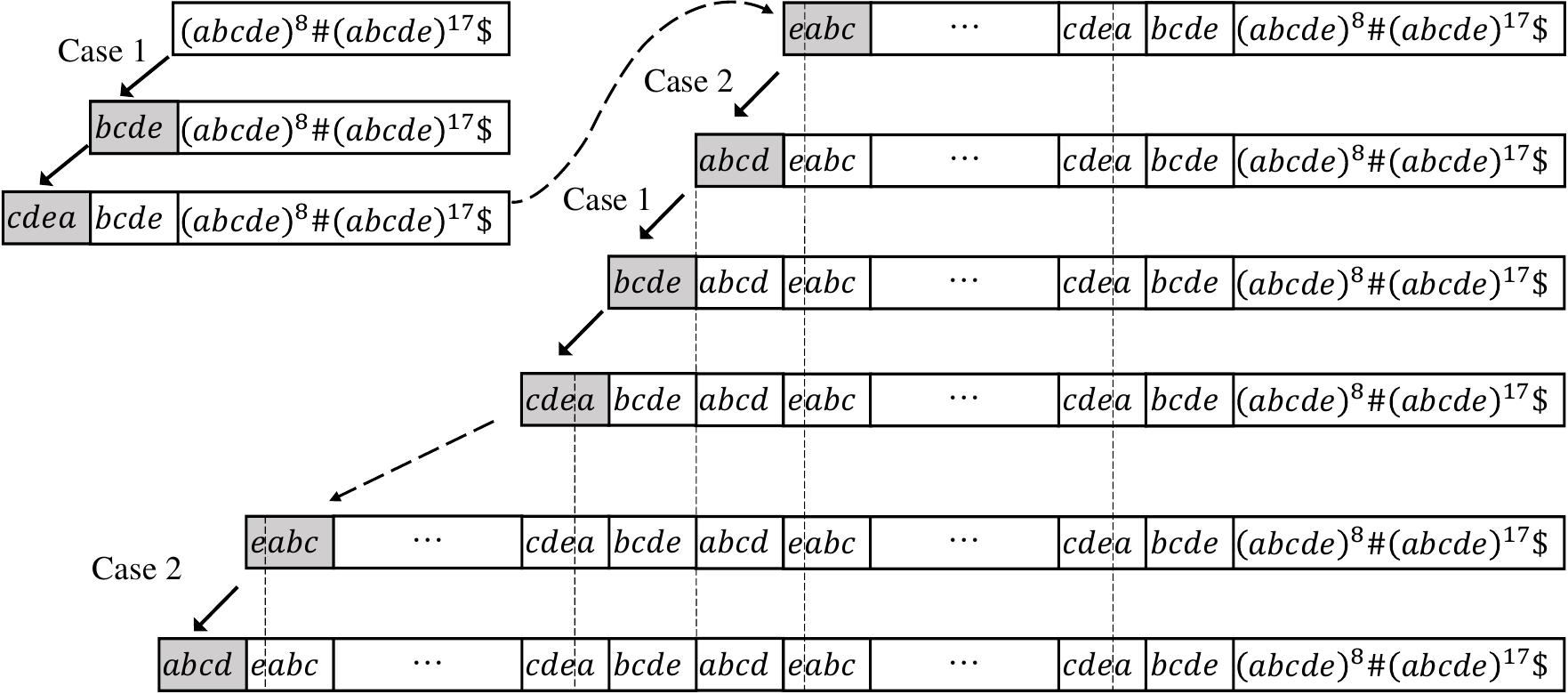}
  \caption{Illustration for Example~\ref{ex:batched_online}, where we  
    update string $(abcde)^{8}\#(abcde)^{17}\$$ to $T = (abcde)^{16} \# (abcde)^{17}\$$ by prepending a block of $4$ characters at each step.
    The characters in gray boxes are the added characters at each update.
    The vertical dashed lines exhibit the boundaries between characters $e$ and $a$.}
  \label{fig:batched_online}
\end{figure}

\section{Conclusions and further work}

This paper investigated the worst-case additive sensitivity of
the size of CDAWGs when a single-character edit operation is performed
on the left-end of the input string.
We proved that the number of new edges that appear after a left-end edit operation is at most the number of existing edges (upper bound).
We also presented (almost) matching lower bounds for all cases of left-end insertions, deletions, and substitutions.

An apparent future work is to close the small gap between our upper and lower bounds, which is at most by an additive factor of 3 (recall Table~\ref{tb:summary}).

Another intriguing open question is the sensitivity of CDAWGs when an edit operation can be performed at an arbitrary position in the string.
Our left-end sensitivity results should partly contribute to the general case, since maximal repeats that touch the edited position can be analyzed in a similar way. What remains is how to deal with maximal repeats which contain the edited position.

Belazzougui and Cunial~\cite{BelazzouguiC17} proposed the \emph{CDAWG-grammar},
which is a grammar-based string compression built on $\CDAWG(T)$.
Namely, if $\CDAWG(T)$ has $e$ edges,
then the CDAWG-grammar for $T$ is of size not greater than $e$.
In the process of building the CDAWG-grammar from the corresponding CDAWG,
every node of in-degree one is not involved in the resulting grammar. 
Thus the size of the CDAWG-grammar can be less than that of the CDAWG, and their size difference depends on each string.
Since the sensitivity of CDAWG-grammars is not well understood, it is interesting to extend our work to the sensitivity of CDAWG-grammars.

\section*{Acknowledgements}

We thank an anonymous referee who drew our attention to
the case of leftward online batched construction of CDAWGs.

This work was supported by KAKENHI grant numbers
JP21K17705~(YN), JP20H05964, 23K24808, and JP23K18466~(SI).

\bibliographystyle{abbrv}
\bibliography{ref}

\end{document}